\newcommand{\z}{\ensuremath{\mathbb{Z}}\xspace}
\newcommand{\n}{\ensuremath{\mathbb{N}}\xspace}
\newdefinition{definition}{Definition}[section]
\newdefinition{remark}{\normalfont \it Remark}
\newdefinition{example}{\normalfont \it Example}
\newtheorem{theorem}{Theorem}[section]
\newtheorem{proposition}[theorem]{Proposition}
\newtheorem{lemma}[theorem]{Lemma}
\newproof{proof}{\textit{Proof}}
\newcommand{\ie}{i.e.\@\xspace}
\newcommand{\an}{\ensuremath{A^{\mathbb{N}}}\xspace}
\newcommand{\az}{\ensuremath{A^{\mathbb{Z}}}\xspace}
\newcommand{\bz}{\ensuremath{B^{\mathbb{Z}}}\xspace}
\newcommand{\set}[1]{\left\{#1\right\}}
\newcommand{\para}[1]{(#1)}
\newcommand{\wrt}{w.r.t.\@\xspace}
\newcommand{\ignore}[1]{}
\newcommand{\CA}{\mathcal{CA}}
\newcommand{\nca}{\ensuremath{\nu\text{-}\mathcal{CA}}\xspace}
\renewcommand{\H}{\ensuremath{H^{(\theexample)}}\xspace}
\begin{document}
\begin{frontmatter}
\title{Non--Uniform Cellular Automata:\\ classes, dynamics, and decidability\tnoteref{blabla}}
\tnotetext[blabla]{This is an extended and improved version of the paper~\cite{cattaneo09}
presented at LATA2009 conference.}
\author[Nice]{Alberto Dennunzio\corref{cor}}
\ead{alberto.dennunzio@unice.fr}

\author[Nice]{Enrico Formenti\corref{cor}}
\ead{enrico.formenti@unice.fr}

\author[Nice]{Julien Provillard}
\ead{julien.provillard@i3s.unice.fr}

\cortext[cor]{Corresponding author.}


\address[Nice]{Universit\'e Nice-Sophia Antipolis,
Laboratoire I3S, 2000 Route des Colles, 06903 Sophia Antipolis
(France)}

\begin{abstract}
The dynamical behavior of non-uniform cellular automata is compared with the one of classical cellular automata. Several differences and
similarities are pointed out by a series of examples. Decidability of
basic properties like surjectivity and injectivity is also established. The final part studies a strong form of equicontinuity property specially
suited for non-uniform cellular automata. 
\end{abstract}
\begin{keyword}
cellular automata \sep non--uniform cellular automata \sep
decidability \sep symbolic dynamics
\end{keyword}
\end{frontmatter}

\section{Introduction}
A complex system is (roughly) defined  by
a multitude of simple individuals which cooperate to build a complex (unexpected) global behavior by local interactions.
Cellular automata (CA) are often used to model complex systems when individuals are embedded in a uniform ``universe'' in which local interactions are
the same for all. Indeed, a cellular automaton is made of identical
finite automata arranged on a regular lattice. Each automaton updates its state by a local rule on the basis of its state and the one of a fixed set of neighbors.
At each time-step, the same (here comes uniformity) local rule is
applied to all finite automata in the lattice. For recent results
on CA dynamics and an up-to-date bibliography see for
instance~\cite{formenti07b,kurka07,CDF08,kari08,
dilena10, Dennunzio08ca,capka09,
Dennunzio09jb,Dennunzio09jc,Dennunzio09jd}.

In a number of situations one needs a more general setting. 
One possibility consists in relaxing the uniformity constraint.
This choice may result winning for example for

\paragraph{Complexity design control} In many phenomena, each individual locally interacts with others but maybe these interactions depend on the individual itself or on its position in the space. For example, when studying the formation of hyper-structures in cells, proteins move in the cellular soup and do not behave just like 
billiard balls. They chemically interact each other only when they meet special situations (see for instance, \cite{kier05}) or when they are at some special places (in rybosomes for instance). It is clear that one might simulate all those situations by a CA but the writing of a single local rule will be an excessive difficult task, difficult to control. A better option would be to write simpler (but different) local rules that are applied only at precise positions so that less constraints are to be taken into account at each time.
 
\paragraph{Structural stability} Assume that we are investigating the robustness of a system \wrt some specific property $P$. If some individuals change their ``standard'' behavior does the system still have property $P$? What is the ``largest'' number of individuals that can change their default behavior so that the system does not change its overall evolution?

\paragraph{Reliability} CA are more and more used to perform fast parallel computations (beginning from \cite{chaudhuri97}, for example). Each cell of the CA can be implemented by a
simple electronic device (FPGAs for example) \cite{nakano02}. Then,
how reliable are computations w.r.t. failure of some of these
devices? (Here failure is interpreted as a device which behaves
differently from its ``default'' way).
\medskip

The generalization of CA to non-uniform CA (\nca) has some interest
in its own since the new model coincides with the set of continuous functions in Cantor topology. It is clear that the class of continuous
functions is too large to be studied fruitfully. In the present paper,
we present several sub-classes that are also interesting in applications. First of all, we show that several classical results
about the dynamical behavior of CA are no longer valid in the new setting.
Even when the analysis is restricted to smaller classes of non-uniform CA, the overall impression is that new stronger techniques will be
necessary to study \nca. However, by generalizing the notion of
De Bruijn graph, we could prove the decidability of basic set properties
like surjectivity and injectivity. We recall that these property are often necessary conditions of many classical definitions of deterministic chaos.

Keeping on with surjectivity and injectivity,
we give a partial answer about reliability and structural stability questions issued above. More precisely,
we answer the following question: assuming to perturb some CA in some finite number of sites, if one knows that the corresponding \nca is surjective (resp., injective) does this imply that the original CA was surjective (resp., injective)?

The final part starts going more in deep with the study of the long-term
dynamical behavior of \nca. Indeed, we show that under some conditions, if a \nca is a perturbed version of some equicontinuous or almost equicontinuous CA, then it shares the same dynamics.

Finally, we develop some complex examples showing that even small perturbations of an almost equicontinuous CA can give raise to
sensitive to initial conditions behavior or to equicontinuous dynamics.

We might conclude that breaking uniformity property in a CA may cause a dramatic change in the dynamical behavior.

\section{Background}
In this section, we briefly recall standard definitions about CA
and dynamical systems. For introductory matter 
see~\cite{kurka04}
, for instance.
%
%
For all $i,j\in\z$ with $i\leq j$ (resp., $i<j$), let
$[i,j]=\set{i,i+1,\ldots,j}$ (resp., $[i,j)=\set{i,\ldots,j-1}$). Let $\n_+$ be the set of positive
integers.
\paragraph{\textbf{Configurations and CA}}
Let $A$ be an alphabet. A \emph{configuration} is
a function from $\z$ to $A$. The \emph{configuration set} $\az$ is
usually equipped with the metric $d$ defined as follows
\[
\forall x,y\in\az,\;
d(x,y)=2^{-n},\;\text{where}\;n=\min\set{i\geq 0\,:\,x_i\ne y_i
\;\text{or}\;x_{-i}\ne y_{-i}}\enspace.
\]
When $A$ is finite, the set $\az$ is a compact, totally
disconnected and perfect topological space (\ie, $\az$ is a Cantor
space). For any pair $i,j\in\z$, with $i\leq j$, and any
configuration $x\in\az$ we denote by $x_{[i,j]}$ the word
$x_i\cdots x_j\in A^{j-i+1}$, \ie, the portion of $x$ inside the
interval $[i,j]$. Similarly, $u_{[i,j]}=u_i\cdots u_j$ is the portion of a word $u\in A^l$ inside $[i,j]$ (here, $i,j\in [0,l)$). In both the previous notations, $[i,j]$ can be replaced by $[i,j)$ with the obvious meaning. For any word $u\in A^*$, $|u|$ denotes its length. A \emph{cylinder} of block $u\in A^k$ and
position $i\in\z$ is the set $[u]_i=\{x\in A^{\z}: x_{[i,
i+k)}=u\}$. Cylinders are clopen sets \wrt the metric $d$ and
they form a basis for the topology induced by $d$. A configuration $x$ is said to be \emph{$a$-finite} for some $a\in A$ if there exists $k\in\n$ such that $x_i=a$ for any $i\notin [-k,k]$. In the sequel, the collection of the $a$-finite configurations for a certain $a$ will be simply called set of finite configurations.

A (one--dimensional) \emph{CA} is a structure $\langle A, r,
f\rangle$, where $A$ is the \emph{alphabet}, $r\in\n$ is the
\emph{radius} and $f: A^{2r+1} \to A$ is the \emph{local rule} of
the automaton. The local rule $f$ induces a \emph{global rule}
$F:\az\to\az$ defined as follows,
\begin{equation}\label{eq:caglobalrule}
\forall x\in \az,\,\forall i\in\z ,\quad F(x)_i= f(x_{i-r},
\ldots, x_{i+r})\enspace.
\end{equation}
Recall that $F$ is a uniformly continuous map \wrt the metric $d$. 

With an abuse of notation, a CA local rule $f$ is extended to the function $f:A^*\to A^*$ which map any $u\in A^*$ of length $l$ to the word $f(u)$ such that $f(u)=\epsilon$ (empty word), if $l\leq 2r$, and $f(u)_i=f(u_{[i,i+2r]})$ for each $i\in[0,l-2r)$, otherwise.
\ignore{
A CA with global rule $F$ is \emph{right} (resp., \emph{left})
\emph{closing} iff $F(x)\neq F(y)$ for any pair $x,y\in\az$ of
distinct left (resp., right) asymptotic configurations, \ie,
$x_{(-\infty,n]}=y_{(-\infty,n]}$ (resp.,
$x_{[n,\infty)}=y_{[n,\infty)}$) for some $n\in\z$, where
$a_{(-\infty,n]}$ (resp., $a_{[n, \infty)}$) denotes the portion
of a configuration $a$ inside the infinite integer interval
$(-\infty,n]$ (resp., $[n, \infty)$). A CA is said to be
\emph{closing} if it is either left or right closing. A rule
$f:A^{2r+1} \to A$ is \emph{rightmost} (resp., \emph{leftmost})
\emph{permutive} iff $\forall u\in A^{2r}, \forall\beta\in
A,\exists \alpha\in A$ such that $f(u\alpha)=\beta$ (resp.,
$f(\alpha u)=\beta$). A 1D CA is said to be \emph{permutive} if
its local rule is either rightmost or leftmost permutive.
}

\paragraph{\textbf{DTDS}} A \emph{discrete time dynamical system (DTDS)}
is a pair $\para{X,G}$ where $X$ is a set equipped with a distance
$d$ and $G:X\mapsto X$ is a map which is continuous on $X$ with
respect to the metric $d$. When $\az$ is the configuration space
equipped with the above introduced metric and $F$ is the global
rule of a CA, the pair $\para{\az,F}$ is a DTDS. From now on, for
the sake of simplicity, we identify a CA with the dynamical system
induced by itself or even with its global rule $F$.

Given a DTDS $\para{X,G}$, an element $x\in X$ is an
\emph{ultimately periodic point} if there exist $p,q\in\n$ such
that $G^{p+q}(x)=G^q(x)$. If $q=0$, then $x$ is a \emph{periodic
point}, \ie, $G^p(x)=x$. The minimum $p$ for which $G^p(x)=x$
holds is called \emph{period} of $x$. 
\ignore{
If the set of all periodic
points of $G$ is dense in $X$, we say that the DTDS has
\emph{dense periodic orbits (DPO)}.
}
%
%

Recall that a DTDS $\para{X,g}$ is \emph{sensitive to the initial
conditions} (or simply \emph{sensitive}) if there exists a
constant $\varepsilon>0$ such that for any element $x\in X$ and
any $\delta>0$ there is a point $y\in X$ such that $d(y,x)<\delta$
and $d(G^n(y),G^n(x))>\varepsilon$ for some $n \in\n$. A DTDS
$\para{X,G}$ is \emph{positively expansive} if there exists a
constant $\varepsilon>0$ such that for any pair of distinct
elements $x,y$ we have $d(G^n(y),G^n(x))\geq\varepsilon$ for some
$n\in\n$. If $X$ is a perfect set, positive expansivity implies
sensitivity.
Recall that a DTDS $\para{X,g}$ is \emph{(topologically)
transitive} if for any pair of non-empty open sets
$U,V\subseteq X$ there exists an integer $n\in\n$ such that
$g^n(U)\cap V\ne\emptyset$.
\ignore{
A DTDS $\para{X,G}$ is \emph{(topologically) mixing} if for any
pair of non-empty open sets $U,V\subseteq X$ there exists an
integer $n\in\n$ such that for any $t\geq n$ we have $G^t(U)\cap
V\ne\emptyset$.

A DTDS $\para{X,g}$ is \emph{(topologically) strongly transitive}
if for any  non-empty open set $U\subseteq X$, it holds that
$\bigcup_{n\in\n} G^n(U)=X$. 
}
A DTDS $\para{X,G}$ is \emph{sujective}
(resp., \emph{injective}) iff  $G$ is surjective (resp., $G$ is
injective).


Let $\para{X,G}$ be a DTDS. An element $x\in X$ is
 an \emph{equicontinuity point} for $G$ if $\forall\varepsilon>0$
there exists $\delta>0$ such that for all $y\in X$,
$d(y,x)<\delta$ implies that $\forall
n\in\n,\;d(G^n(y),G^n(x))<\varepsilon$. For a CA $F$, the
existence of an equicontinuity point is related to the existence
of a special word, called \emph{blocking word}. A word $u\in A^k$
is $s$-blocking ($s\leq k$) for a CA $F$ if there exists an offset
$j\in [0, k-s]$ such that for any $x,y\in [u]_0$ and any
$n\in\n$, $F(x)_{[j,j+s-1]}=F(y)_{[j,j+s-1]}$\,. A word $u\in A^k$
is said to be \emph{blocking} if it is $s$-blocking for some
$s\leq k$. A DTDS is said to be \emph{equicontinuous} if
$\forall\varepsilon>0$ there exists $\delta>0$ such that for all
$x,y\in X$, $d(x,y)<\delta$ implies that $\forall
n\in\n,\;d(G^n(x),G^n(y))<\varepsilon$. If $X$ is a compact set, a
DTDS $\para{X,G}$ is equicontinuous iff the set $E$ of all its
equicontinuity points is the whole $X$. A DTDS is said to be
\emph{almost equicontinuous} if $E$ is residual (\ie, $E$ contains
an infinite intersection of dense open subsets). 
In~\cite{kurka97}, K\r{u}rka proved that a CA is almost
equicontinuous iff it is non-sensitive iff it admits a
$r$-blocking word.

Recall that two DTDS $\para{X,G}$ and $\para{X^\prime,G^{\prime}}$
are \emph{topologically conjugated} if
there exists a homeomorphism $\phi:X\mapsto
X^\prime$ such that
$G^\prime\circ \phi=\phi\circ G$. 
In that case, $\para{X,G}$ and $\para{X^\prime,G^{\prime}}$ share some properties such as surjectivity, injectivity, transitivity.
\section{Non--Uniform Cellular Automata}
The meaning of~\eqref{eq:caglobalrule} is that the same local rule
$f$ is applied to each site of the CA. Relaxing this 
constraint gives us the definition of a \nca. More formally one
can give the following notion.

\begin{definition}[Non--Uniform Cellular Automaton (\nca)]\mbox{}\\
A \emph{Non--Uniform Cellular Automaton} (\nca) is a structure
$\para{A, \{h_i,r_i\}_{i\in\z}}$ defined by a family of local
rules $h_i: A^{2r_i+1} \to A$ of radius $r_i$ all based on the
same alphabet $A$.
\end{definition}
Similarly to CA, one can define the global rule of a \nca as the
map $H:\az\to\az$ given by the law
\[
\forall x\in\az,\, \forall i\in\z,\quad H(x)_i =
h_i(x_{i-r_i},\dots,x_{i+r_i})\enspace.
\]
From now on, we identify a \nca (resp., CA) with the discrete
dynamical system induced by itself or even with its global rule
$H$ (resp., $F$).

It is well known that the Hedlund's Theorem~\cite{hedlund69}
characterizes CA as the class of continuous functions commuting
with the shift map $\sigma : A^\z\to A^\z$, where $\forall
x\in\az,\forall i\in\z,\sigma(x)_i=x_{i+1}$. 
It is
straightforward to prove that
a function $H:\az\to\az$ is the global map of a \nca iff it is
continuous.
\ie, in other words, 
iff the pair $(\az,H)$ is a DTDS.
Remark that the definition of \nca is by far too general to be useful.
Therefore, we are going to focus our attention only over three
special subclasses of \nca.

\begin{definition}[d\nca]
A \nca $\para{A, \{h_i,r_i\}_{i\in\z}}$ is a d\nca if there exist two naturals $k,r$ and a rule
$h: A^{2r+1}\to A$ such that $h_i=h$ for all integers $i$ with $|i|>k$. In this case, we say that the given \nca has $h$ as \emph{default rule}.
\end{definition}
\begin{definition}[p\nca]
A \nca $\para{A, \{h_i,r_i\}_{i\in\z}}$ is a p\nca if there exist two naturals $k$,$r$, a \emph{structural period} $p>0$, and two sets $\{f_0,\ldots, f_{p-1}\}$ and $\{g_0,\ldots, g_{p-1}\}$ of rules of radius $r$ such that for any integer $i$ with $|i|>k$
\[
h_i=
\begin{cases}
f_{i\,\text{mod}\, p} & \text{if $i>k$}\\
g_{i\,\text{mod}\, p} & \text{if $i<-k$}
\end{cases}
\]
If $p=1$, we say that the given \nca has $f_0$ and $g_0$ as right and left default rules, respectively.
\end{definition}
\begin{definition}[r\nca]
A \nca $H=\para{A, \{h_i,r_i\}_{i\in\z}}$ is a r\nca if there exists an integer $r$ such that $\forall i\in\z$, $r_i=r$. In this case, we say that $H$ has radius $r$.
\end{definition}
The first two class restrict the number of positions at which
non-default rules can appear, while the third class restricts the
number of different rules but not the number of occurrences nor it
imposes the presence of default rules. Some simple examples follow.

\begin{example}\label{ex:ncanotca}
Consider the \nca $\H:\az\to\az$ defined as
\[
\forall x\in\az,\quad \H(x)_i =
\begin{cases}
1 &\text{if $i=0$} \\
0 &\text{otherwise}
\end{cases}\enspace.
\]
Remark that \H is a d\nca which cannot be a CA since it does not
commute with $\sigma$. This trivially shows that the class of \nca is larger than the one of CA.
\end{example}
\begin{example}
\label{ex3}
Consider the \nca $\H:\az\to\az$ defined as
\[
\forall x\in\az,\quad \H(x)_i =
\begin{cases}
x_{i+1} &\text{if $i<0$} \\
x_{0} &\text{if $i=0$} \\
x_{i-1} &\text{if $i>0$} \\
\end{cases}
\]
Remark that \H is a p\nca (with $p=1$) but not a d\nca.
\end{example}
\begin{example}
Consider the \nca $\H:\az\to\az$ defined as
\[
\forall x\in\az,\quad \H(x)_i =
\begin{cases}
1 &\text{if $i$ is even} \\
0 &\text{otherwise.}
\end{cases}
\]
Remark that \H is a r\nca but not a p\nca.
\end{example}
\begin{example}\label{ex:ncanotrnca}
Consider the \nca $\H:\az\to\az$ defined as
$\forall x \in\az, \H(x)_i = x_0$.
Remark that \H is a \nca but not a r\nca.
\end{example}

We give some relationships and properties involving the classes of \nca above introduced. 
\begin{proposition}
$\CA\varsubsetneq d\nca\varsubsetneq p\nca\varsubsetneq r\nca\varsubsetneq\nca$,
where $\CA$ is the set of all CA.
\end{proposition}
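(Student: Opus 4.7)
The plan is to verify the four inclusions directly from the definitions and to exhibit witnesses for the four strict separations, each of which is either already given in the preceding examples or obtained by a trivial aperiodic modification. All four embeddings amount to specialising parameters, so the argument is largely bookkeeping.

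For the inclusions, every CA $\langle A,r,f\rangle$ is a d\nca with $k=0$ and $h_i=f$ for every $i$; every d\nca with default rule $h$ is a p\nca with structural period $p=1$ and $f_0=g_0=h$; and $r\nca\subseteq\nca$ is immediate. The only embedding that requires a small observation is $p\nca\subseteq r\nca$: the definition of a p\nca constrains the radii $r_i$ only for $|i|>k$, leaving them arbitrary at the finitely many exceptional positions. I would set $R=\max\{r,\max_{|i|\le k}r_i\}$ and replace each $h_i$ by a padded rule $h'_i\colon A^{2R+1}\to A$ which evaluates $h_i$ on its central $2r_i+1$ coordinates and ignores the remaining ones; this yields the same global map but with uniform radius $R$, hence an r\nca.

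For the strict separations I would use the examples already given. Example \ref{ex:ncanotca} is a d\nca whose global map does not commute with $\sigma$, so by Hedlund's characterisation of CA it cannot be a CA. Example \ref{ex3} is a p\nca with $p=1$ in which the right default rule (a left-shift) differs from the left default rule (a right-shift); since a d\nca requires a single default rule beyond a finite window, it is not a d\nca. For $p\nca\varsubsetneq r\nca$ the witness is an r\nca whose sequence $i\mapsto h_i$ of local rules is not eventually periodic on at least one side, so that no structural period can exist; any radius-$0$ r\nca in which $h_i$ deviates from a fixed default precisely when $i$ lies in an aperiodic set (e.g.\ the primes) does the job. Finally, in Example \ref{ex:ncanotrnca} the rule at position $i$ must read the coordinate $x_0$, forcing $r_i\ge|i|$; since the radii are unbounded, the \nca cannot be an r\nca.

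No step is conceptually hard. The only place requiring attention is the $p\nca\subseteq r\nca$ inclusion, where the padding construction above is essential because the p\nca definition leaves the radii at positions $|i|\le k$ unconstrained; everything else is a direct check against the definitions.
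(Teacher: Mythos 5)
Your proof is correct and, for three of the four separations, coincides with the paper's (which simply points to Examples~\ref{ex:ncanotca}--\ref{ex:ncanotrnca} and declares the inclusions immediate from the definitions). Two points where you genuinely add or deviate are worth noting. First, your padding argument for $p\nca\subseteq r\nca$ addresses a real subtlety the paper passes over: the p\nca definition constrains the radii only at positions $|i|>k$, so the inclusion only holds once one either replaces the exceptional rules by padded rules of a common radius (legitimate here because the paper identifies a \nca with its induced global map) or tacitly assumes uniform radii; making this explicit is an improvement. Second, for $p\nca\varsubsetneq r\nca$ you replace the paper's witness (the r\nca applying the constant-$1$ rule at even positions and the constant-$0$ rule at odd positions) with an r\nca whose deviation set is aperiodic, e.g.\ the primes. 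This is the stronger choice: the paper's alternating example is in fact realizable as a p\nca with structural period $p=2$ (take $f_0=g_0$ the constant-$1$ rule and $f_1=g_1$ the constant-$0$ rule), so it does not actually separate the two classes, whereas your aperiodic witness does, since for every $k$ and $p$ the sequence $i\mapsto h_i$ fails to be periodic beyond $k$. Your justifications of the remaining separations (non-commutation with $\sigma$ via Hedlund for Example~\ref{ex:ncanotca}, incompatible left and right default rules for Example~\ref{ex3}, and unbounded radii forced by $r_i\ge|i|$ for Example~\ref{ex:ncanotrnca}) match the paper's intent and are sound.
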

\begin{proof}
The inclusions $\subseteq$ easily follow from the definitions. For
the strict inclusions refer to Examples~\ref{ex:ncanotca} to
\ref{ex:ncanotrnca}.\qed
\end{proof}

Similarly to what happens in the context of CA one can prove the following.

\begin{proposition}
\label{prop:conjug}
Any r\nca is topologically conjugated to a r\nca of radius 1.
\end{proposition}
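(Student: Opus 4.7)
The plan is to mimic the classical block-encoding trick that exhibits any CA of radius $r$ as conjugate to a CA of radius $1$ on a larger alphabet, and to check that the construction goes through unchanged with arbitrary per-site rules. Given $H = \para{A, \{h_i,r\}_{i\in\z}}$ of radius $r$, I would first dispose of the trivial case $r=0$ (each $h_i$ can be lifted to a radius-$1$ rule that simply ignores its neighbours) and then assume $r\geq 1$. The candidate conjugacy is the ``$r$-to-$1$ blocking'' map $\phi:\az\to\bz$, where $B=A^r$, defined by
\[
\phi(x)_i = (x_{ir}, x_{ir+1}, \ldots, x_{(i+1)r-1}) \enspace .
\]
The first thing to verify is that $\phi$ is a homeomorphism; this is immediate once one notes that $\phi$ admits an obvious inverse that unpacks each block into $r$ consecutive symbols, and that both $\phi$ and $\phi^{-1}$ send cylinders to cylinders.

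The next step is to construct the target r\nca $H' = \para{B, \{h'_i, 1\}_{i\in\z}}$. For every $i\in\z$ and every triple $b_{-1}, b_0, b_1\in B$, set $w = b_{-1}b_0b_1 \in A^{3r}$ and define
\[
h'_i(b_{-1},b_0,b_1) = \bigl( h_{ir}(w_{[0,2r]}), h_{ir+1}(w_{[1,2r+1]}), \ldots, h_{(i+1)r-1}(w_{[r-1,3r-1]}) \bigr) \enspace .
\]
The key observation justifying this definition is that for each $j\in [ir,(i+1)r)$ the arguments $x_{j-r},\ldots,x_{j+r}$ of $h_j$ lie in $[(i-1)r,(i+2)r)$, which is exactly the range encoded by the three consecutive blocks $\phi(x)_{i-1}, \phi(x)_i, \phi(x)_{i+1}$.

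The conjugacy identity $H' \circ \phi = \phi \circ H$ then follows by unfolding the definitions: the $i$-th $B$-coordinate on both sides equals the $r$-tuple $(H(x)_{ir},\ldots,H(x)_{(i+1)r-1})$. The only genuine obstacle here is index bookkeeping between $A$-cells and $B$-blocks; no new ideas beyond the classical CA construction are needed, since it is precisely the uniformity of the \emph{radius} (as opposed to the rules themselves) that allows a single block encoding to work simultaneously at every site.
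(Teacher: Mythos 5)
Your proof is correct and follows essentially the same route as the paper: the same block map $\phi$ onto $B=A^r$ and the same radius-$1$ rules $h'_i$ whose $j$-th component applies $h_{ir+j}$ to the appropriate window of the concatenated blocks. The only (harmless) difference is bookkeeping: you separately treat $r=0$, which the paper leaves implicit, while the paper instead singles out $r=1$ as the trivial case.
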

\begin{proof}
Let $H$ be a r\nca on the alphabet $A$.
If $H$ has radius $r=1$ then the statement is trivially true. 
Otherwise, let $B = A^r$ and
define $\phi : A^{\mathbb{Z}} \rightarrow B^{\mathbb{Z}}$ as
$\forall i \in \mathbb{Z}, \phi(x)_i = x_{[ir,(i+1)r)}$.
Then, it is not difficult to see that the r\nca
$(B^{\mathbb{Z}},H')$ of radius $1$ defined as $ \forall x \in
A^{\mathbb{Z}}, \forall i \in \mathbb{Z}, H'(x)_i =
h'_i(x_{i-1},x_i,x_{i+1})$
 is topologically conjugated to $H$ via $\phi$,
where $\forall u,v,w \in B, \forall i \in \mathbb{Z}, \forall j
\in \{0,\dots,r-1\}, (h'_i(u,v,w))_j =
h_{ir+j}(u_{[j,r)}vw_{[0,j]})$.\qed
\end{proof}

Finally, the following result shows that every r\nca is a
subsystem of a suitable CA. 
\begin{theorem}
Any r\nca $H:\az\to\az$ of radius $r$ is a subsystem of a  CA,
\ie, there exist a CA $F:\bz\to\bz$ on a suitable alphabet $B$ and
a continuous injection $\phi:\az\to\bz$ such that
$\phi\circ H=F\circ\phi$.
\end{theorem}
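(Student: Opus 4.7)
The plan is to augment the alphabet so that the ``identity'' of each local rule is carried along with the symbol at each cell, turning the position--dependent rules of the r\nca into a position--independent rule applied to a larger alphabet.

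Concretely, since $A$ is finite, the set $R$ of all maps $A^{2r+1}\to A$ is finite, so I put $B=A\times R$. If the given r\nca is $H=\para{A,\{h_i,r\}_{i\in\z}}$, I define the encoding $\phi:\az\to\bz$ by
\[
\forall x\in\az,\,\forall i\in\z,\quad \phi(x)_i=(x_i,h_i)\enspace.
\]
Note that the second coordinate of $\phi(x)$ is a fixed element of $R^{\z}$ (the ``rule pattern''), independent of $x$. Continuity of $\phi$ is immediate because each coordinate of $\phi(x)$ depends only on the single coordinate $x_i$, and injectivity is immediate since $\phi(x)=\phi(y)$ forces $x_i=y_i$ for every $i$.

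Next I define a CA $F:\bz\to\bz$ of radius $r$ via the local rule $f:B^{2r+1}\to B$ that reads the rule stored in the central cell and applies it to the first coordinates of the neighbourhood, while leaving the rule itself untouched:
\[
f\bigl((a_{-r},g_{-r}),\ldots,(a_{r},g_{r})\bigr)=\bigl(g_0(a_{-r},\ldots,a_r),\,g_0\bigr)\enspace.
\]
The verification of the conjugation identity is then a one--line computation. For every $x\in\az$ and $i\in\z$,
\[
(F\circ\phi)(x)_i=f\bigl((x_{i-r},h_{i-r}),\ldots,(x_{i+r},h_{i+r})\bigr)=\bigl(h_i(x_{i-r},\ldots,x_{i+r}),h_i\bigr)=(H(x)_i,h_i)=(\phi\circ H)(x)_i\enspace,
\]
so $\phi\circ H=F\circ\phi$ as required.

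There is no real obstacle here; the only point worth underlining is that finiteness of $A$ (which is standing throughout the paper) is used to guarantee that $B=A\times R$ is a legitimate finite alphabet, so that $F$ really is a CA in the sense of the paper. The image $\phi(\az)$ is an $F$--invariant closed subset of $\bz$, which is precisely what it means to say that $H$ is realised as a subsystem of the CA $F$.
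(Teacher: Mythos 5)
Your proposal is correct and is essentially the paper's own proof: the paper also takes $B=A\times\{1,\dots,n\}$ with $n$ enumerating the finitely many local rules of radius $r$, encodes each cell as its symbol paired with (an index of) the rule acting there, and uses the CA whose local rule applies the rule named in the central cell while preserving that name. Your formulation with $\phi(x)_i=(x_i,h_i)$ is marginally cleaner than the paper's indexing, but the construction and verification are the same.
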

\ignore{
\[
\begin{diagram}
\node{A^{\mathbb{Z}}} \arrow{e,t,->}{H} \arrow{s,l,->}{\phi} \node[1]{A^{\mathbb{Z}}} \arrow{s,r,->}{\phi} \\
\node{B^{\mathbb{Z}}} \arrow{e,b,->}{F} \node[1]{B^{\mathbb{Z}}}
\end{diagram}
\]
}
\begin{proof}
Consider a r\nca $H : A^{\mathbb{Z}} \rightarrow A^{\mathbb{Z}}$
of radius $r$. Remark that there are only $n=|A|^{|A|^{2r+1}}$
distinct functions $h_i : A^{2r+1} \rightarrow A$. Take a
numbering $(f_j)_{1 \leq j\leq n}$ of these functions and let $B =
A \times \{1, \dots, n\}$. Define the map $\phi :
A^{\mathbb{Z}} \rightarrow B^{\mathbb{Z}}$ such that $\forall x
\in A^{\mathbb{Z}}, \forall i \in \mathbb{Z}, \phi(x)_i =
(x_i,k)$, where $k$ is the integer for which $H(x)_i =
f_k(x_{i-r},\dots,x_{i+r})$. Clearly, $\phi$ is injective and
continuous. Now, define a CA $F : B^{\mathbb{Z}} \rightarrow
B^{\mathbb{Z}}$ using the local rule $f : B^{2r+1} \rightarrow B$
such that
\[
f((x_{-r},k_{-r}),\dots,(x_0,k_0),\dots,(x_r,k_r)) =
(f_{k_0}(x_{-r},\dots,x_r),k_0) \enspace.
\]
It is not difficult to see that $\phi \circ H = F\circ \phi$.\qed
\end{proof}
\section{CA versus \nca}
\label{inj_surj}
In this section, we illustrate some differences in dynamical
behavior between CA and \nca. The following properties which are
really specific for CA are lost in the larger class of \nca.
%
\begin{enumerate}
\item[P1)] \emph{the set of ultimately periodic points is dense in
$\az$}.
\item[P2)] \emph{surjectivity $\Leftrightarrow$ injectivity on
finite configurations.}
\item[P3)] \emph{surjectivity $\Leftrightarrow$ any configuration
has a finite number of pre--images}.
\item[P4)] \emph{expansivity $\Rightarrow$ transitivity}
\item[P5)] \emph{expansivity $\Rightarrow$ surjectivity}
\item[P6)] \emph{injectivity $\Rightarrow$ surjectivity}
\end{enumerate}
%
Some of the previous properties are not valid for the following
\nca.
\begin{example}
Let $A = \{0,1\}$ and define the d\nca $\H:\az\to\az$ as
\[
\forall x\in\az,\forall i\in\z,\quad \H(x)_i =
\begin{cases}
x_i &\text{if $i=0$} \\
x_{i-1} &\text{otherwise}\enspace.
\end{cases}
\]
%
\begin{description}
\item P1) is not valid for $\H$.
\end{description}
\begin{proof}
Let $H=\H$. Denote by $P$ and $U$ the sets of periodic and
ultimately periodic points, respectively. Let $E =\{x\in\az:
\forall i \in\mathbb{N}, x_i = x_0 \}$. Take $x\in P$ with $H^p(x)
= x$. Remark that the set $B_x = \{i\in\n: x_i \neq x_0 \}$ is
empty. Indeed, by contradiction, assume that $B\ne\emptyset$ and
let $m=\min B$. It is easy to check that $\forall y \in\az,
\forall i\in\n, H^i(y)_{[0,i]} = {y_0}^{i+1}$, hence $x_{m} =
H^{pm}(x)_{m} = x_0$, contradiction. Thus $x\in E$ and $P
\subseteq E$.

Let $y \in H^{-1}(E)$. We show that $B_y=\emptyset$. By
contradiction, let $n=\min B_y$. Since $H(y)_{n + 1} = y_{n} \neq
y_0 = H(y)_0$, then $H(y) \notin E$. Contradiction, then $y \in E$
and $H^{-1}(E) \subseteq E$. So $ \forall n \in \mathbb{N},
H^{-n}(E) \subseteq E$. Moreover,
$U= \bigcup_{n\in \n} H^{-n}(P) \subseteq \bigcup_{n \in
\mathbb{N}} H^{-n}(E) \subseteq E $
and $E$ is not dense.\qed
\end{proof}
\begin{description}
\item P3) is not valid for $\H$
\end{description}
\begin{proof}
We show that $\H$ has no configuration with an infinite
number of pre-images although it is not surjective. In particular,
any configuration has either $0$ or $2$ pre-images.

First of all, $\H$ is not surjective. Indeed, since $\forall
x \in\az, \H(x)_0 = \H(x)_1$, configurations in the set
$B =\{ x\in \az : x_0 \ne x_1\}$ have no pre-image. Furthermore,
any $x\in {\az} \setminus B$ has $y$ and $z$ as unique
pre--images, where $y$ and $z$ are configurations such that
$\forall i\notin \{-1,0\}, y_i = z_i = x_{i+1}, y_0 = z_0 = x_0,
y_{-1} = 0; z_{-1} = 1$.\qed
\end{proof}
We stress that $\H$ is not surjective, despite it is based on
two local rules each of which generates a surjective CA (namely,
the identity CA and the shift CA).
\end{example}
 In order to explore other properties, we introduce an other \nca.
\begin{example} 
Let $A = \{0,1\}$ and define $\H:\az\to\az$ by
\[
\forall x\in\az,\forall i\in\z,\quad \H(x)_i =
\begin{cases}
0 &\text{if $i=0$} \\
x_{i-1}\oplus x_{i+1} &\text{otherwise}\enspace,
\end{cases}
\]
where $\oplus$ is the xor operator.
\end{example}
\begin{description}
\item P2) is not valid for $\H$.
\end{description}
\begin{proof}
We prove that $\H$ is injective on the $0$-finite
configurations but it is not surjective. It is evident that
$\H$ is not surjective. Let $x,y$ be two finite
configurations such that $\H(x) = \H(y)$. By
contradiction, assume that $x_i \neq y_i$, for some $i\in\z$.
Without loss of generality, assume that $i\in\n$. Since $x_{i}
\oplus x_{i+2} = \H(x)_{i+1} = \H(y)_{i+1} = y_{i}
\oplus y_{i+2}$,
 it holds that $x_{i+2} \neq y_{i+2}$ and, by induction,
$\forall j \in\n, x_{i+2j} \neq y_{i+2j}$. We conclude that
$\forall j \in\n, x_{i+2j} = 1$ or $y_{i+2j} = 1$ contradicting
the assumption that $x$ and $y$ are finite.\qed
\end{proof}
\begin{description}
\item P4) and P5) are not valid for $\H$.
\end{description}
\begin{proof}
Let $H=\H$. $H$ is not transitive since it is not surjective.
We show that $H$ is positively expansive. Let $x$ and
$y$ be two distinct configurations and let $k=\min_{i\in\z}\{|i|, x_i
\neq y_i \}$. If $k=0$, then $d(H^0(x),H^0(y))=1\geq
\frac{1}{2}$. Without loss of generality assume $k=n>0$. 
It is clear that $H(x)_{n-1}=x_{n-2} \oplus x_n \neq
y_{n-2} \oplus y_n = H(y)_{n-1}$ and $H(x)_{[0,n-2]} =
H(y)_{[0,n-2]}$. Iterating the same reasoning one sees that
$H^{n-1}(x)_1 \neq H^{n-1}(y)_1$. Hence $d(H^{n-1}(x),H^{n-1}(y))
\geq \frac{1}{2}$. Thus $H$ is positively expansive with
expansivity constant $\frac{1}{2}$.\qed
\end{proof}

Consider now the \nca $H^{(2)}$ from Example~\ref{ex3}.
\begin{description}
\item P6) is not valid for $H^{(2)}$.
\end{description}
\begin{proof}
Concerning non-surjectivity, just remark that
only configurations $x$ such that $x_{-1} = x_0 = x_1$ have a
pre-image. Let $x,y\in\az$ with $H^{(2)}(x) = H^{(2)}(y)$. Then, we have
$\forall i > 0, x_{i-1} = y_{i-1}$ and $\forall i < 0, x_{i+1} =
y_{i+1}$. So $x=y$ and $H$ is injective.\qed
\end{proof}
%
\section{Basic Properties of \nca and Decidability}

This section is centered on two fundamental properties, namely surjectivity and injectivity.
Focusing on CA, both properties are strongly related to
peculiar dynamical behaviors. Injectivity coincides with reversibility~\cite{hedlund69}, while surjectivity is a necessary condition for almost
all the widest known definitions of deterministic chaos (see~\cite{cattaneo99}, for instance).
\smallskip

In (1D) CA settings, the notion of De Bruijn graph is very
handy to find fast decision algorithms for surjectivity,
injectivity and openness~\cite{sutner91}. Here, we extend this notion to work with p\nca having period 1 and find decision algorithm for surjectivity. We stress that
surjectivity is undecidable for two (or higher) dimensional p\nca,
since surjectivity is undecidable for 2D CA~\cite{kari94a}.

\begin{definition}\label{def:debrujin}
Consider a p\nca $H$ of radius $r$ and period $p=1$ having $f$ and $g$ as right and left default rules. Let $k\in\n$ be the largest natural such that $h_k\ne f$ or $h_{-k}\ne g$. The \emph{De Bruijn graph} of $H$ is the triple
$G=(V,E,\ell_G)$ where $V = A^{2r} \times \{-k,\dots,k+1\}$ and $E$ is the set of pairs $((u,\alpha),(v,\beta))\in V^2$ with label $\ell_G((u,\alpha),(v,\beta))$ in $A\times \{0,1\}$ such that $u_{[1,2r)}=v_{[0,2r-1)}$ and one of the following conditions is verified
\begin{enumerate}
\item[a)] $\alpha = \beta = -k$; in this case the label is $(g(u_0v),0)$
\item[b)] $\alpha+1 = \beta$; in this case the label is $(h_\alpha(u_0v),0)$
 \item[c)] $\alpha= \beta = k+1$; in this case the label is $(f(u_0v),1)$
\end{enumerate}
\end{definition}
By this graph, a configuration can be seen as a bi-infinite path
on vertexes which passes once from a vertex whose second component
is in $[-k+1,k]$ and infinite times through other vertices.  The
second component of vertices allows to single out the positions of
local rules different from the default one. The image of a
configuration is the sequence of first components of edge labels.


\begin{lemma}
\label{lem:surj-dec-p-1}
Surjectivity is decidable for p\nca with structural period $p=1$.
\end{lemma}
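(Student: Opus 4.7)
The approach is to interpret the De~Bruijn graph $G$ as a finite device whose bi-infinite \emph{admissible} paths are in bijection with configurations of $H$, and then reduce surjectivity to a universality problem for a finite union of products of rational $\omega$-languages.

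\textbf{Step 1 (configurations $\leftrightarrow$ admissible paths).} By inspecting Definition~\ref{def:debrujin}, each $x \in \az$ determines a unique bi-infinite path $\pi(x)$ in $G$ that uses only edges of type (a) at indices $i \le -k-1$, traverses the intermediate levels $-k+1,\dots,k+1$ via edges of type (b) at indices $-k \le i \le k$, and uses only edges of type (c) at indices $i \ge k+1$; moreover the first components of its edge labels spell out exactly $H(x)$. Call such paths \emph{admissible}. Therefore $H$ is surjective iff every $y \in \az$ is the $A$-label of some admissible path.

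\textbf{Step 2 (splitting admissible paths).} Every admissible path factors as $\pi_L\cdot\pi_M\cdot\pi_R$, where $\pi_L$ is a left-infinite path in the induced subgraph $G_L$ on level-$(-k)$ vertices (a copy of the classical, strongly connected De~Bruijn graph of $g$), $\pi_M$ is a transition path of length $2k+1$, and $\pi_R$ is a right-infinite path in the induced subgraph $G_R$ on level-$(k+1)$ vertices (a copy of the classical De~Bruijn graph of $f$). For each level-$(-k)$ vertex $v_0$, let $\mathcal{L}(v_0)$ be the set of $A$-labels of left-infinite paths in $G_L$ ending at $v_0$; symmetrically define $\mathcal{R}(v_1)$ for each level-$(k+1)$ vertex $v_1$. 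For each pair $(v_0,v_1)$, let $M(v_0,v_1)\subseteq A^{2k+1}$ be the finite, computable set of $A$-labels of transition paths from $v_0$ to $v_1$. Writing $y=y_L\,y_M\,y_R$ with $y_L=y_{(-\infty,-k-1]}$, $y_M=y_{[-k,k]}$, $y_R=y_{[k+1,\infty)}$, surjectivity of $H$ becomes
\[
\forall y\in\az,\ \exists (v_0,v_1):\ y_L\in\mathcal{L}(v_0),\ y_M\in M(v_0,v_1),\ y_R\in\mathcal{R}(v_1).
\]

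\textbf{Step 3 (decidability).} Each $\mathcal{L}(v_0)$ (respectively $\mathcal{R}(v_1)$) is a rational $\omega$-language, recognised by the automaton obtained from $G_L$ (respectively $G_R$) by reversing all edges and declaring $v_0$ (respectively $v_1$) the unique initial state; strong connectedness of the classical De~Bruijn graph guarantees an infinite run from every start vertex. Enumerating the finitely many middle words $y_M \in A^{2k+1}$ and setting $P(y_M)=\{(v_0,v_1):y_M\in M(v_0,v_1)\}$, surjectivity is equivalent to
\[
\bigcup_{(v_0,v_1)\in P(y_M)} \mathcal{L}(v_0)\times\mathcal{R}(v_1)\ =\ A^{-\mathbb{N}}\times A^{\mathbb{N}} \qquad\text{for every } y_M.
\]
Splitting on the (finitely many) possible subsets $J\subseteq P(y_M)$ of the form $\{(v_0,v_1)\in P(y_M) : y_L\in\mathcal{L}(v_0)\}$, this reduces to: for every $J$ whose defining rational $\omega$-language $\{y_L : J = \{(v_0,v_1)\in P(y_M) : y_L\in\mathcal{L}(v_0)\}\}$ is nonempty, the rational $\omega$-language $\bigcup_{(v_0,v_1)\in J}\mathcal{R}(v_1)$ must equal $A^{\mathbb{N}}$. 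Since emptiness and universality for rational $\omega$-languages are classically decidable, the whole test is effective. The principal obstacle is precisely this coupling of left and right $\omega$-languages through the admissible middle paths: $H(\az)$ is in general not a subshift, so no off-the-shelf sofic-inclusion algorithm applies, and the reduction through $P(y_M)$ has to be arranged by hand before classical $\omega$-regular decidability finishes the job.
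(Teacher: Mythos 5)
Your proof is correct, but it takes a genuinely different route from the paper's. The paper reduces surjectivity to a statement about \emph{finite} words: $H$ is surjective iff the De Bruijn automaton accepts every word of $(A\times\{0\})^*(A\times\{1\})^*$, a regular-language inclusion test decidable by elementary means; the only non-automata-theoretic ingredient is a compactness argument (the nested non-empty compact sets $X_n=\{x\in\az : H(x)_{[-n,n]}=y_{[-n,n]}\}$ have non-empty intersection), which upgrades ``every central window of $y$ is attainable'' to ``$y$ has a preimage''. You instead stay with bi-infinite objects throughout: you factor each admissible bi-infinite path into a left-infinite piece, a finite middle, and a right-infinite piece, and reduce surjectivity to universality of finite unions of products of rational $\omega$-languages, discharged via your type decomposition together with decidability of emptiness and universality for $\omega$-regular languages. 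Both arguments are sound. Yours avoids the compactness step entirely (the left/middle/right factorisation is exact) and isolates cleanly why $H(\az)$ is not a subshift and why the two half-lines must be coupled through the finitely many middle paths; the price is noticeably heavier machinery --- effective Boolean combinations of $\omega$-regular languages, hence B\"uchi complementation for the emptiness test on your type classes --- where the paper needs only an inclusion test between languages of finite words. One small point worth making explicit in your Step 1: you state that every $x$ yields an admissible path labelled by $H(x)$, but the direction actually needed for surjectivity is the converse, that every admissible path arises as $\pi(x)$ for some configuration $x$; this is immediate from the overlap condition $u_{[1,2r)}=v_{[0,2r-1)}$ on consecutive vertices, which lets one read $x$ off the vertex sequence, but it should be said.
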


\begin{proof}
Let $H$ be a p\nca with structural period $p=1$ and let $G$ be its De Bruijn graph. We prove that $H$ is surjective iff  $G$ recognizes the language $(A \times \{0\})^{*}(A \times \{1\})^{*}$ when $G$ is considered as the graph of a finite state automaton.

Let $k$ be as in Definition~\ref{def:debrujin} and denote by $(w, s)$ any word of $(A \times \{0,1\})^*$ with  $w\in A^l$, $s\in\{0,1\}^l$, for some $l\in\n$. 

Assume that $H$ is surjective and take $(w',s) \in (A \times
\{0\})^{n}(A \times \{1\})^{*}$, for any $n\in\n$. Then, there exists $x\in\az$ such that $H(x)_{[m,m+l)}=w'$, where $m=k+1-n$. Set $w=x_{[m-r,m+l+r)}$. Hence, the word $(w',s)$ is the sequence of edge labels of the following vertex path on $G$: 
\[
(w_{[0,2r)},\alpha_0),\dots,(w_{[l,l+2r)},\alpha_{l})
\]
where
\[
\alpha_i = \left\{
\begin{array}{cl}
-k & \text{ if } m+i < - k \\
k+1 & \text{ if } m+i > k \\
m+i & \text{ otherwise}
\end{array}
\right.
\]
For the opposite implication, assume that $G$ recognizes $(A
\times \{0\})^*(A\times \{1\})^*$. Take $y\in\az$ and let $n
> k$. Since $G$ recognizes $(y_{[-n,n]},0^{n+k+1}1^{n-k})$, there exists
$x\in\az$ such that  $H(x)_{[-n,n]} = y_{[-n,n]}$. Set
$X_n=\set{x\in\az, x_{[n,n]} = y_{[-n,n]}}$. For any $n\in\n$,
$X_n$ is non-empty and compact. Moreover, $X_{n+1}\subseteq
X_{n}$. Therefore, $X=\bigcap_{n\in\n}X_n\ne\emptyset$ and
$H(X)=\set{y}$. Hence, $H$ is surjective.\qed
\end{proof}
In order to deal with injectivity, we introduce the following  notion.
%
\begin{definition}
Consider a p$\nca$ of structural period $1$ and let $G = (V,E,\ell_G)$
be its De Bruijn graph. The \emph{product graph} $P$ of $H$ is a
labeled graph $P = (V \times V, W,\ell_P)$ where $(((u,\alpha),(v,\beta)),((w,\gamma),(z,\delta))) \in W$ iff
\[
\begin{cases}
 \alpha=\beta\;\text{and}\;\gamma = \delta\\
 ((u,\alpha),(w,\gamma)) \in E\;\text{and}\;((v,\beta),(z,\delta)) \in E\\
 \ell_G((u,\alpha),(w,\gamma))=\ell_G((v,\beta),(z,\delta))
\end{cases}
\]
and $\ell_P: W\to A$ is defined as
\[
\ell_P((((u,\alpha),(v,\beta)),((w,\gamma),(z,\delta))))=
\left\{
\begin{array}{ll}
0,&\text{if}\;u=v\;\text{and}\;w=z\\
1,&\text{otherwise.}
\end{array}
\right.
\]
The \emph{reduced product graph} $D$ of $P$ 
is the sub-graph of $P$ made by all the strongly connected
components of $P$ plus all nodes and edges belonging to any path between two connected components. 
\end{definition}
Let $k$ be as in Definition~\ref{def:debrujin}. We can also consider $D$ as the transition graph of a finite 
automaton with set of initial states $\{((u,-k),(v,-k)) : u,v \in A^{2r}\}$ 
and set of final states $\{((u,k+1),(v,k+1)) : u,v\in A^{2r}\}$.
Denote by $L(D)$ the language recognized by this finite automaton.

\begin{lemma}\label{prop:inj-dec-p-1}
Injectivity is decidable for p$\nca$ with structural period $p=1$.
\end{lemma}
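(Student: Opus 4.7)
The plan is to show $H$ is injective iff $L(D)\subseteq\set{0}^*$, which then reduces the problem to a routine finite-graph reachability check on $D$ (namely, checking whether any accepting path of the finite automaton $D$ traverses an edge with $\ell_P$-label $1$).

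First I would interpret pairs of configurations $(x,y)\in(\az)^2$ with $H(x)=H(y)$ as bi-infinite paths in the product graph $P$, mirroring the argument of Lemma~\ref{lem:surj-dec-p-1}. A bi-infinite path in $G$ describes a configuration, and the product construction fuses two $G$-paths carrying the same first-component edge-label sequence into a single bi-infinite path in $P$. By the very definition of $\ell_P$, one has $x=y$ iff every edge along this $P$-path carries label $0$. Consequently, $H$ is non-injective iff $P$ admits a bi-infinite path (of the admissible shape induced by the position coordinate $\alpha$) containing at least one edge with $\ell_P$-label $1$.

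Second, I would pass from bi-infinite paths in $P$ to finite paths in $D$. Since in $G$ every edge either stays at $\alpha=-k$ (case (a) of Definition~\ref{def:debrujin}), stays at $\alpha=k+1$ (case (c)), or strictly increases $\alpha$ by one (case (b)), the same holds in $P$. Hence the only non-trivial SCCs of $P$ lie in the $\alpha=-k$ layer or in the $\alpha=k+1$ layer, and every admissible bi-infinite path in $P$ decomposes as: an infinite loop inside a non-trivial SCC at $\alpha=-k$, a finite chain through intermediate $\alpha$-levels, and an infinite loop inside a non-trivial SCC at $\alpha=k+1$. The reduced graph $D$ keeps exactly the vertices and edges involved in such decompositions, so $L(D)$ is precisely the set of labels of such finite chains (possibly including cycles inside the end SCCs). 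Thus $H$ is non-injective iff $L(D)$ contains a word with at least one symbol $1$.

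The main obstacle is the correct handling of a distinguishing (label-$1$) edge that lies inside one of the end SCCs rather than on the middle chain. Such an edge need not lie on a ``direct'' simple path from an initial state to a final state, but since SCCs are strongly connected, the edge is reachable from any vertex of that SCC and one can return to an exit vertex leading to the middle chain without leaving the SCC. This converts the infinite-loop situation in $P$ into a finite accepting path in $D$ which includes a cycle through the label-$1$ edge, thereby witnessing a word of $L(D)$ with a symbol $1$. Once this correspondence is established, the decision procedure is clear: compute $G$, then $P$, then $D$ (all finite and effective from $H$), and check via a standard reachability scan whether some edge with $\ell_P$-label $1$ is simultaneously reachable from an initial state and able to reach a final state in $D$.
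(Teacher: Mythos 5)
Your proposal is correct and follows essentially the same route as the paper: both establish that $H$ is injective iff $L(D)\subseteq 0^*$ by identifying pairs of configurations with equal image with bi-infinite paths in the product graph and then truncating to finite accepting paths in the reduced graph $D$. Your write-up is in fact somewhat more explicit than the paper's about why the truncated path lies in $D$ (via the SCC structure forced by the monotone position coordinate) and about the case where the label-$1$ edge sits inside an end SCC.
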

\begin{proof}
Let $H$ be a p\nca with period $p=1$ and let $D$ its reduced product graph. We prove that $H$ is injective if and only if $L(D)\subseteq 0^*$.

If $L(D)\not\subseteq 0^*$, there exists a word 
$w\in L(D)$ such that $w_i=1$, for some $i$. By definition of $D$, this means that there are at least two distinct configurations which have the same image by $H$. Hence $H$ is not injective.

If $H$ is not injective, then there are two distinct configurations
$x$ and $y$ such that $H(x)=H(y)$. Let $i\in\z$ be such that
$x_i\neq y_i$ and set $m = \max(|i|,k+1)$. For any $j\in\z$, define $u^j = x_{[j-r,j+r)}$ et $v^j= y_{[j-r,j+r)}$. Then,
the path on $D$
\begin{multline*}
(((u^{-m},-k),(v^{-m},-k)),\dots,((u^{-k},-k),(v^{-k},-k)),\dots,((u^0,0),(v^0,0)),\dots,\\
((u^{k+1},k+1),(v^{k+1},k+1)),\dots,((u^m,k+1),(v^m,k+1)))
\end{multline*}
starts from an initial state, ends at a final state, and  contains an edge labelled with $1$. Hence, $L(D)\not\subseteq 0^*$.
\qed
\end{proof}
\begin{theorem}
Surjectivity and injectivity are decidable for p$\nca$.
\end{theorem}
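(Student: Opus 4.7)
The plan is to reduce the general case to the structural period $p=1$ case already handled by Lemmas~\ref{lem:surj-dec-p-1} and~\ref{prop:inj-dec-p-1}, via the same block-grouping conjugation used in the proof of Proposition~\ref{prop:conjug}. Given a p\nca $H$ of structural period $p$ on alphabet $A$, I would set $B = A^p$ and define the homeomorphism $\phi : \az \to \bz$ by $\phi(x)_i = x_{[ip,\,(i+1)p)}$. Then $H' = \phi \circ H \circ \phi^{-1}$ is well-defined and continuous on $\bz$, so by the characterization given just after the definition of \nca, $H'$ is itself a \nca.

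Next I would verify that $H'$ is in fact a p\nca of structural period $1$. The key observation is that the super-cell at position $i$ depends on the original cells in a window around $[ip,(i+1)p)$, so the local rule $h'_i$ of $H'$ is determined by the tuple $(h_{ip}, h_{ip+1}, \ldots, h_{(i+1)p-1})$. For $i$ large enough that all of $ip, ip+1, \ldots, (i+1)p-1$ lie beyond the exceptional region $[-k,k]$ of $H$, this tuple is exactly $(f_0,\ldots,f_{p-1})$ (taking residues mod $p$ in order), hence constant in $i$. Symmetrically on the left one gets a single default super-rule from $(g_0,\ldots,g_{p-1})$. Only finitely many super-indices $i$ see any of the exceptional rules of $H$, so $H'$ is a p\nca with structural period $1$ and two (super-)default rules.

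Given this, the theorem follows quickly: the construction of $\phi$ and of the local rules of $H'$ is effective from the finite description of $H$, so we compute $H'$, then apply Lemma~\ref{lem:surj-dec-p-1} (resp.\ Lemma~\ref{prop:inj-dec-p-1}) to decide whether $H'$ is surjective (resp.\ injective). Since $\phi$ is a homeomorphism, $H$ and $H'$ are topologically conjugated, and as recalled at the end of Section~2 such conjugations preserve both surjectivity and injectivity. Thus $H$ is surjective (resp.\ injective) iff $H'$ is, which concludes the decision procedure.

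The only genuinely delicate point is the bookkeeping in the second step: one must be careful that the ``exceptional'' super-indices of $H'$ cover all super-cells that overlap the exceptional interval $[-k,k]$ of $H$, including the two boundary super-cells on each side where part of the window still sees a default rule of $H$ but the super-rule is not yet the default super-rule of $H'$. Once the exceptional region of $H'$ is taken to be any finite interval containing all these boundary indices, the structural-period-$1$ property holds by the mod-$p$ alignment of the defaults, and the rest is immediate.
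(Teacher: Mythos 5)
Your proposal is correct and follows essentially the same route as the paper: reduce to structural period $1$ by a block-grouping conjugation and then invoke Lemmas~\ref{lem:surj-dec-p-1} and~\ref{prop:inj-dec-p-1} together with the fact that conjugacy preserves surjectivity and injectivity. The only cosmetic difference is that you group cells into blocks of size $p$ directly, whereas the paper first assumes without loss of generality that $p=r$ and then applies the radius-$r$ grouping of Proposition~\ref{prop:conjug}.
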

\begin{proof}
Let $H$ be a p\nca of radius $r$. If $p=1$, by Lemma~\ref{lem:surj-dec-p-1} (resp., Lemma~\ref{prop:inj-dec-p-1}) we can decide surjectivity (resp., injectivity) of $H$. Otherwise, without loss of generality, assume $p=r$. By Proposition~\ref{prop:conjug}, $H$ is topologically conjugated to a p\nca $H'$ with structural period 1. By Lemma~\ref{lem:surj-dec-p-1} (resp., Lemma~\ref{prop:inj-dec-p-1}) and since $H$ is surjective (resp., injective) iff $H'$ is as well, we can decide surjectivity (resp., injectivity) of $H$.
\qed
\end{proof}
\subsection{Injectivity  and surjectivity: structural implications}
We now study how informations (about surjectivity or
injectivity) on the global rule $H$ of a p$\nca$ with structural period 1 relate to
properties of the composing local rules.

\begin{proposition}\label{prop:injsurj}
Let $F$ and $G$ be two CA of rules $f$ and $g$, respectively. 
For any p$\nca$ $H$ with period $p=1$ having $f$ and $g$ as right and left default rules, it holds that
\begin{enumerate}
\item $H$ surjective $\Rightarrow F$ surjective and $G$ surjective
\item $H$ injective $\Rightarrow F$ surjective and $G$ surjective
\item $H$ injective on finite configurations $\Rightarrow F$
surjective and $G$ surjective
\end{enumerate}
\end{proposition}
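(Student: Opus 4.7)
The plan is to treat all three items with a single geometric idea: push information into the ``$f$-zone'' (positions $i>k$) where $H$ acts as the default right rule $f$, and symmetrically into the ``$g$-zone'' where $H$ acts as $g$. Let $k\in\n$ be chosen so that $h_i=f$ for all $i>k$ and $h_i=g$ for all $i<-k$. By symmetry I will only discuss the conclusion for $F$; the one for $G$ is obtained by a shift to the left instead of to the right.

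For item 1, I would use the well-known reformulation that a CA is surjective iff for every finite word $w$ there is a preimage word under the local rule $f$ extended to words. Given any $w\in A^l$, pick any $y\in\az$ with $y_{[k+1,\,k+1+l)}=w$ and, using surjectivity of $H$, take $x\in\az$ with $H(x)=y$. Since $h_i=f$ for every $i\in[k+1,k+1+l)$, unfolding $H(x)_i$ gives $f(x_{[k+1-r,\,k+1+l+r)})=w$, so every finite word lies in the image of $f$ and $F$ is surjective.

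For items 2 and 3, I would argue by contrapositive and handle them uniformly. If $F$ is not surjective then, by property P2 recalled in Section~\ref{inj_surj} (the Moore--Myhill theorem for CA), $F$ is not injective on finite configurations; fix two distinct finite configurations $x,y\in\az$ with $F(x)=F(y)$ that agree outside some interval $[a,b]$. Now shift the witness into the $f$-zone: set $\tilde x_i=x_{i-N}$ and $\tilde y_i=y_{i-N}$ for any $N$ with $a+N-r>k$. Then $\tilde x\neq \tilde y$, both are still finite, they differ only on $[a+N,b+N]$, and every site $i$ whose $r$-neighborhood meets this interval satisfies $i>k$, hence $h_i=f$. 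A direct check then shows $H(\tilde x)_i=H(\tilde y)_i$ for all $i$ (trivially where the two $r$-neighborhoods agree, and via the equation $F(x)=F(y)$ where they do not), so $H(\tilde x)=H(\tilde y)$. This contradicts injectivity of $H$ on finite configurations, giving item 3, and a fortiori injectivity of $H$ on $\az$, giving item 2.

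The only nontrivial external ingredient is the Moore--Myhill equivalence P2, which is the input for items 2 and 3; the rest is a bookkeeping exercise ensuring that $N$ is large enough that the influence region of the discrepancy between $\tilde x$ and $\tilde y$ lies strictly in the $f$-zone, so that the equality $F(x)=F(y)$ transfers verbatim to $H(\tilde x)=H(\tilde y)$. I expect this bookkeeping around the parameter $N$ to be the main, but routine, obstacle.
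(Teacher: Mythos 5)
Your proof is correct and follows essentially the same route as the paper's: both argue by transporting a witness of the non-surjectivity of $f$ --- an orphan word for item 1, a pair of configurations differing on a finite set but with equal images for items 2 and 3 --- into the region $i>k$ where $H$ applies the default rule $f$ (and symmetrically for $g$). The only cosmetic difference is that for items 2 and 3 you invoke the Moore--Myhill equivalence (P2), whereas the paper invokes Hedlund's diamond property; these are interchangeable forms of the same fact, and your explicit bookkeeping on the shift $N$ matches the paper's implicit placement of the diamond at position $k+1$.
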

\begin{proof}
Let $k$ be the largest natural such that $h_k\neq f$ or $h_{-k}\neq g$.
Without loss of generality, assume that $F$ is not surjective.
\begin{enumerate}
\item There exists a block $u$ which has no pre-image by $f$.  Let $y$ be any configuration belonging to 
$[u]_{k+1}$.
By definition of $H$, there is no configuration $x\in\az$ with $H(x)=y$.
\item By a theorem 
in~\cite{hedlund69},  $f$ admits a diamond, \ie, there exist $u,v,w\in A^+$ with $u\neq v$ of same length such that $f(wuw)=f(wvw)$. 
Build $x\in [wuw]_{k+1}$ and $y\in [wvw]_{k+1}$ such that $x_i=y_i$ for all $i$ different from the cylinder positions. 
By definition of $H$, $H(x)=H(y)$.
\item the proof is similar to item 2.
\end{enumerate}
\qed
\end{proof}
For d$\nca$ a stronger result holds.
\begin{proposition}
Let $F$ be a CA of local rule $f$. 
For any d$\nca$ $H$ with default rule $f$, it holds that
\begin{enumerate}
\item $H$ injective $\Rightarrow F$ injective 
\item $H$ injective $\Rightarrow H$ surjective
\end{enumerate}
\end{proposition}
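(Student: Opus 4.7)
Both parts are proved by contrapositive, using two classical results of Hedlund for CA: (i) a CA is non-injective iff two distinct configurations coinciding outside a finite window share the same image; (ii) an injective CA is automatically bijective, and its inverse is again a CA.

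For item 1, suppose $F$ is not injective. Choose distinct $x,y\in\az$ agreeing outside a finite interval $[m,n]$ with $F(x)=F(y)$. Let $R=\max_{|i|\le k}r_i$, pick a shift $L>n+k+R$, and set $\tilde x_i=x_{i-L}$, $\tilde y_i=y_{i-L}$. For $|i|\le k$ every position used by the non-default rule $h_i$ lies at $\le k+R<L+m$, where $\tilde x$ and $\tilde y$ agree, so $H(\tilde x)_i=H(\tilde y)_i$. For $|i|>k$ the default rule $f$ applies, giving $H(\tilde x)_i=F(x)_{i-L}=F(y)_{i-L}=H(\tilde y)_i$. Hence $H(\tilde x)=H(\tilde y)$ while $\tilde x\ne\tilde y$, contradicting injectivity of $H$.

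For item 2, item 1 and Hedlund's reversibility theorem give $F$ bijective with inverse CA $F^{-1}$ of some radius $r'$. Set $H':=F^{-1}\circ H$. For $|i|>k+r'$ every cell consulted in computing $H'(x)_i$ sits in the default region of $H$, so $H'(x)_i=F^{-1}(F(x))_i=x_i$. Thus $H'$ is again a d\nca, now with default rule the identity, and since $F^{-1}$ is bijective $H'$ shares both injectivity and surjectivity with $H$. Replacing $H$ by $H'$, we may therefore assume that the default rule is $\mathrm{id}$ and that $H(x)_i=x_i$ for all $|i|>K$, where $K$ is the enlarged perturbation radius.

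Finally, let $R=\max_{|i|\le K}r_i$. For each boundary block $b\in A^{[-K-R,-K)\cup(K,K+R]}$, define $\Phi_b:A^{[-K,K]}\to A^{[-K,K]}$ by $\Phi_b(w)=H(x^{b,w})_{[-K,K]}$, where $x^{b,w}$ is any extension with middle block $w$ and prescribed boundary $b$; the output depends only on $(b,w)$. Because $H$ acts as the identity outside $[-K,K]$, two preimages of a common image under $H$ automatically share the same boundary $b$, and then injectivity of $H$ forces $\Phi_b$ to be injective; finiteness of its equal-sized domain and codomain makes $\Phi_b$ bijective. Given any $y\in\az$, put $x_i=y_i$ for $|i|>K$, read $b$ off $y$, and take $x_{[-K,K]}=\Phi_b^{-1}(y_{[-K,K]})$; this yields $H(x)=y$, proving surjectivity. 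The main obstacle throughout is the reduction to identity default rule---one must carefully track how the radius $r'$ of $F^{-1}$ enlarges the effective perturbation window---after which the surjectivity of $H$ collapses to a pure pigeonhole argument.
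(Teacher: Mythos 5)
Your argument for item 1 rests on a false lemma. Your ``classical result (i)'' asserts that a CA is non-injective iff two distinct configurations coinciding outside a finite window share the same image. The ``only if'' direction fails: by the Garden-of-Eden theorem (Moore--Myhill), a CA is surjective iff it is \emph{pre-injective}, i.e.\ injective on pairs of configurations differing in only finitely many cells, so every surjective non-injective CA is a counterexample. Concretely, for $f(x,y)=x\oplus y$ on $\{0,1\}$ one has $F(0^\infty)=F(1^\infty)$, yet whenever $F(x)=F(y)$ and $x_i=y_i$ for a single $i$ one deduces $x=y$; hence every witnessing pair for non-injectivity differs at \emph{every} coordinate. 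For such an $F$ no translation can make the two preimages agree on the perturbed window $[-k-R,k+R]$, and your contrapositive argument collapses precisely in the only hard case ($F$ surjective but not injective; when $F$ is non-surjective you could instead have cited Proposition~\ref{prop:injsurj}). There is also a minor arithmetic slip ($L>n+k+R$ does not guarantee $L+m>k+R$ when $m$ is very negative; you need $L>k+R-m$), but that is cosmetic. The paper sidesteps all of this with a counting argument: fixing $y$, it takes $Y$ to be the $|A|^{2k+1}$ configurations agreeing with $y$ off $[-k,k]$ and $X=F^{-1}(Y)$; since $H(X)\subseteq Y$ and $F$ is surjective by Proposition~\ref{prop:injsurj}, injectivity of $H$ forces $|A|^{2k+1}\leq|X|=|H(X)|\leq|Y|=|A|^{2k+1}$, which simultaneously yields injectivity of $F$ and $H(X)=Y\ni y$.

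Your item 2, by contrast, is a correct and genuinely different route \emph{once item 1 is granted} (the statement of item 1 is true; only your proof of it is broken). Conjugating by the inverse CA $F^{-1}$ to reduce to the identity default rule, observing that the enlarged perturbation window has size $k+r'$, and then running a pigeonhole argument on the finite maps $\Phi_b$ is sound: the boundary $b$ of any preimage is forced by the image because $H$ acts as the identity off $[-K,K]$, and injectivity of $H$ does transfer to injectivity of each $\Phi_b$. This is more work than the paper's one-line conclusion $H(X)=Y$, but it localizes explicitly where the finiteness of the perturbation is used. To repair the whole proof you must either replace your proof of item 1 (the counting argument above is the natural fix) or find an argument for item 2 that does not presuppose invertibility of $F$.
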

\begin{proof}
Let $k$ be the largest natural such that $h_k\neq f$ or $h_{-k}\neq f$.
Fix a configuration $y\in\az$ and for any $u\in A^{2k+1}$ let $y^u\in [u]_{-k}$ be the configuration such that $y^u_i=y_i$ for all $i\in\z$, $|i|>k$. Define $Y=\{y^u: u\in A^{2k+1}\}$ and $X=F^{-1}(Y)$. 
\begin{enumerate}
\item If $H$ is injective then $|X|=|H(X)|$ and by Proposition \ref{prop:injsurj} 
$F$ is surjective. So $X\geq |A|^{2k+1}$. 
By definition of $H$, it holds that $H(X)\subseteq Y$. Hence, 
\[
|A|^{2k+1} \leq |X| = |H(X)| \leq |Y| = |A|^{2k+1}
\]
which gives $|X|=|A|^{2k+1}$ Thus, $F$ is injective.
\item if $H$ is injective we also have $H(X)=Y$. 
Thus $y\in Y$ has a pre-image by $H$.
\end{enumerate}
\qed
\end{proof}

\section{Dynamics}
%
In order to study equicontinuity and almost equicontinuity, we introduce an intermediate class between d\nca and r\nca.
\begin{definition}[$n$-compatible r\nca]
A r\nca $H$ is \emph{$n$-compatible with a local rule} $f$ if for
any $k\in\n$, there exist two integers $k_1
> k$ and $k_2 < -k$ such that
$\forall i \in [k_1,k_1+n) \cup [k_2,k_2+n), \; h_i=f$.
\end{definition}
In other words, a \nca is $n$-compatible with $f$ if, arbitrarily
far from the center of the lattice, there are intervals of length
$n$
in which the local rule $f$ is applied.

The notion of blocking word and the related results cannot be
directly restated in the context of \nca because some words are
blocking just thanks to the uniformity of CA. To overcome this
problem we introduce the following notion.
\begin{definition}[Strongly blocking word]
A word $u\in A^l$ is said to be \emph{strongly $s$-blocking} ($0<s\leq l$) for a
CA $F$ of local rule $f$ if there exists an offset $d \in
[0,l-s]$ such that for any \nca $H$ with $\forall i \in
[0,l)$, $h_i=f$ it holds that
\[
\forall x, y \in [u]_0, \forall n \geq 0,\quad H^n(x)_{[d,d+s)} =
H^n(y)_{[d,d+s)}\enspace.
\]
\end{definition}
Roughly speaking, a word is strongly blocking if it is blocking
whatever be the perturbations involving the rules in its
neighborhood. The following extends Proposition $5.12$
in~\cite{kurka04} to strongly $r$-blocking words.

\begin{theorem}\label{prop:kequi}
Let $F$ be a CA of local rule $f$ and radius $r$. The following statements are equivalent:
\begin{enumerate} 
 \item[(1)] $F$ is equicontinuous;
 \item[(2)] there exists $k>0$ such that any word $u \in A^k$ is strongly $r$-blocking for $F$;
 \item[(3)] any d\nca $H$ of default rule $f$ is ultimately periodic.
\end{enumerate}
\end{theorem}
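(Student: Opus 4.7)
The plan is to prove the three statements are equivalent via the cycle $(3) \Rightarrow (1) \Rightarrow (2) \Rightarrow (3)$. For $(3) \Rightarrow (1)$ the argument is short: $F$ is itself a d\nca with default rule $f$ and no actually perturbed positions, so (3) yields integers $q, p$ with $F^{q+p} = F^q$. This ultimate periodicity of $F$ as a map is equivalent, on the Cantor space $\az$, to equicontinuity, which is the content of K\r{u}rka's Proposition 5.12 cited in the paper.

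The implication $(1) \Rightarrow (2)$ is the technical core. Starting from equicontinuity, I will use K\r{u}rka's Proposition 5.12 to produce an integer $k_0$ such that every word of length $k_0$ is $r$-blocking for $F$, and by enlarging $k_0$ I will arrange that a single offset $d \in [0, k_0 - r]$ works uniformly. Equicontinuity also gives $F^{q + p} = F^q$ for some integers $q, p$. Fixing an arbitrary word $u$ of length $k$ (chosen large relative to $k_0, q, p, r$) and an arbitrary \nca $H$ with $h_i = f$ for every $i \in [0, k)$, I will first establish by induction on $n$ the shrinking-window identity $H^n(x)_i = F^n(x)_i$ for $i \in [nr, k - nr)$; this step uses only that $h_i = f$ inside $[0, k)$, so it is uniform over all admissible $H$. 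Combined with $F^{q + p} = F^q$ it yields the corresponding periodicity $H^{n + p}(x)_i = H^n(x)_i$ on a similar shrinking region for $n \geq q$. The main obstacle is that each of these agreements holds only for the finitely many $n$ where the shrinking window still covers $[d, d + r)$, whereas strong $r$-blocking requires the conclusion for every $n \in \n$. I plan to close this gap by combining the $F$-blocking of $u$, which bounds the effective dependence of $F^n(x)_{[d, d + r)}$ uniformly in $n$ through a window of size $k_0$, with the identity $F^{q + p} = F^q$, so that the $H$-trace at $[d, d+r)$ is forced to coincide with $F^n(x)_{[d, d + r)}$ for every $n$, and hence depends on $x$ only through $u$.

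The implication $(2) \Rightarrow (3)$ will be handled as follows. Let $H$ be a d\nca with default $f$ whose perturbation is confined to $[-K, K]$. Specialising (2) to $H = F$ gives the $r$-blocking of every $u \in A^k$ for $F$, so by K\r{u}rka's Proposition 5.12 $F$ is equicontinuous and $F^{q + p} = F^q$ for some integers $q, p$. For each $j \geq K + 1$ the subword $x_{[j, j + k)}$ lies entirely in the default region of $H$, so (2) applied to our $H$ forces $H^n(x)_{[j + d, j + d + r)}$ to be determined by $x_{[j, j + k)}$; specialising strong blocking to $H = F$ identifies this value with $F^n(x)_{[j + d, j + d + r)}$. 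Letting $j$ vary over $[K + 1, \infty)$, and symmetrically over $(-\infty, -K - 1]$, covers every cell outside a bounded central interval, so $H^n(x)$ agrees with $F^n(x)$ on both tails for every $n$, whence each tail is eventually $p$-periodic from time $q$ uniformly in $x$. The central interval, being finite, has finite state space; its evolution is determined by its current state together with eventually periodic boundary data from the tails, so it is itself eventually periodic with some uniform start time $q'$ and period $p'$ (a multiple of $p$). Assembling tails and center yields $H^{q' + p'} = H^{q'}$, which is (3).
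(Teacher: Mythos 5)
Your proposal is correct and follows essentially the same route as the paper's proof: the identical cycle of implications, K\r{u}rka's characterization ($F$ equicontinuous iff $F^{q+p}=F^q$ for some $q,p$) as the key external fact, the light-cone agreement of $H$ with $F$ inside the default window for $(1)\Rightarrow(2)$, and the periodic-tails-plus-finite-center decomposition for $(2)\Rightarrow(3)$. The only step you leave as an assertion---that the $H$-trace at $[d,d+r)$ coincides with the $F$-trace for all $n$---is precisely where the paper's computation lives: the local-rule identity $f^{q+p}(w)=f^{q}(w)_{[pr,\,|w|-(2q+p)r)}$ must be applied to the full default window $H^{i}(x)_{[0,k)}$ at \emph{every} time $i$, not only at $i=0$, which yields $H^{i+q+p}(x)_{[d,d+r)}=H^{i+q}(x)_{[d,d+r)}$ and hence pins down the entire $H$-column (and its coincidence with the $F$-column) from its first $q+p+1$ terms.
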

\begin{proof}
(1) $\Rightarrow$ (2). Suppose that $F$ is equicontinuous. By~\cite[Th. 4]{kurka97}, there exist $p
> 0$ and $q \in \mathbb{N}$ such that $F^{q+p} = F^q$. As a consequence,
we have that
\[
\forall u \in A^*, |u| > 2(q+p)r \Rightarrow f^{p+q}(u) =
f^q(u)_{[pr,|u|-(2q+p)r)}
\enspace.
\]
Let $H$ be a \nca such that $h_j=f$ for each $j\in[0,(2p+2q+1)r)$. For any $x\in\az$ and $i\in\n$,
consider the following words: 
\ignore{
$s^{(i)}= H^i(x)_{[0,qr)}$,
$t^{(i)}= H^i(x)_{[qr,(q+p)r)}$, $u^{(i)}=
H^i(x)_{[(q+p)r,(q+p+1)r)}$, $v^{(i)}=
H^i(x)_{[(q+p+1)r,(q+2p+1)r)}$, $w^{(i)}=
H^i(x)_{[(q+2p+1)r,(2q+2p+1)r)}$.
}
\begin{align*}
s^{(i)}= & H^i(x)_{[0,qr)} \\
t^{(i)}= & H^i(x)_{[qr,(q+p)r)} \\
u^{(i)}= & H^i(x)_{[(q+p)r,(q+p+1)r)} \\
v^{(i)}= & H^i(x)_{[(q+p+1)r,(q+2p+1)r)} \\
w^{(i)}= & H^i(x)_{[(q+2p+1)r,(2q+2p+1)r)}\enspace.
\end{align*}
For all $i \in[0,q+p]$, $u^{(i)}$ is completely
determined by $s^{(0)}t^{(0)}u^{(0)}v^{(0)}w^{(0)} =
x_{[0,(2q+2p+1)r)}$ (see Figure~\ref{motsfortementbloquants}). Moreover, for any natural $i$, we have
\ignore{
$u^{(i+q+p)}= f^{q+p}(s^{(i)}t^{(i)}u^{(i)}v^{(i)}w^{(i)})$ $=
f^q(s^{(i)}t^{(i)}u^{(i)}v^{(i)}w^{(i)})_{[pr,(p+1)r)}
 =  (t^{(i+q)}u^{(i+q)}v^{(i+q)})_{[pr,(p+1)r)} =  u^{(i+q)}$.
 }
\begin{align*}
u^{(i+q+p)} & =  f^{q+p}(s^{(i)}t^{(i)}u^{(i)}v^{(i)}w^{(i)})\\
&=  f^q(s^{(i)}t^{(i)}u^{(i)}v^{(i)}w^{(i)})_{[pr,(p+1)r)} \\
 &=  (t^{(i+q)}u^{(i+q)}v^{(i+q)})_{[pr,(p+1)r)} \\
 &=  u^{(i+q)}\enspace.
\end{align*}

\begin{center}
\begin{figure}[ht]
 \scalebox{0.8}{\ifx\JPicScale\undefined\def\JPicScale{1}\fi \unitlength
\JPicScale mm
\begin{picture}(147.5,55)(0,0) \linethickness{0.3mm}
\put(25,50){\line(1,0){25}} \put(25,45){\line(0,1){5}}
\put(50,45){\line(0,1){5}} \put(25,45){\line(1,0){25}}
\linethickness{0.3mm} \put(105,50){\line(1,0){25}}
\put(105,45){\line(0,1){5}} \put(130,45){\line(0,1){5}}
\put(105,45){\line(1,0){25}} \linethickness{0.1mm}
\multiput(25,45)(0.12,-0.12){375}{\line(1,0){0.12}}
\linethickness{0.1mm}
\multiput(85,0)(0.12,0.12){375}{\line(1,0){0.12}}
\linethickness{0.3mm} \put(70,5){\line(1,0){15}}
\put(70,0){\line(0,1){5}} \put(85,0){\line(0,1){5}}
\put(70,0){\line(1,0){15}} \linethickness{0.1mm}
\multiput(50,45)(0.12,-0.12){167}{\line(1,0){0.12}}
\linethickness{0.1mm}
\multiput(85,25)(0.12,0.12){167}{\line(1,0){0.12}}
\put(77.5,47.5){\makebox(0,0)[cc]{$u^{(i)}$}}

\put(60,47.5){\makebox(0,0)[cc]{$t^{(i)}$}}

\put(37.5,47.5){\makebox(0,0)[cc]{$s^{(i)}$}}

\put(95,47.5){\makebox(0,0)[cc]{$v^{(i)}$}}

\put(117.5,47.5){\makebox(0,0)[cc]{$w^{(i)}$}}

\put(77.5,22.5){\makebox(0,0)[cc]{$u^{(i+q)}$}}

\put(60,22.5){\makebox(0,0)[cc]{$t^{(i+q)}$}}

\put(95,22.5){\makebox(0,0)[cc]{$v^{(i+q)}$}}

\put(77.5,2.5){\makebox(0,0)[cc]{$u^{(i+q)}$}}

\linethickness{0.3mm}
\put(25.62,52.5){\line(1,0){23.75}}
\put(49.37,52.5){\vector(1,0){0.12}}
\put(25.62,52.5){\vector(-1,0){0.12}}
\linethickness{0.3mm}
\put(50.62,52.5){\line(1,0){18.76}}
\put(69.38,52.5){\vector(1,0){0.12}}
\put(50.62,52.5){\vector(-1,0){0.12}}
\linethickness{0.3mm}
\put(70.62,52.5){\line(1,0){13.76}}
\put(84.38,52.5){\vector(1,0){0.12}}
\put(70.62,52.5){\vector(-1,0){0.12}}
\linethickness{0.3mm}
\put(85.62,52.5){\line(1,0){18.76}}
\put(104.38,52.5){\vector(1,0){0.12}}
\put(85.62,52.5){\vector(-1,0){0.12}}
\linethickness{0.3mm}
\put(105.63,52.5){\line(1,0){23.75}}
\put(129.38,52.5){\vector(1,0){0.12}}
\put(105.63,52.5){\vector(-1,0){0.12}}
\put(37.5,55){\makebox(0,0)[cc]{$qr$}}

\put(60,55){\makebox(0,0)[cc]{$pr$}}

\put(77.5,55){\makebox(0,0)[cc]{$r$}}

\put(95,55){\makebox(0,0)[cc]{$pr$}}

\put(117.5,55){\makebox(0,0)[cc]{$qr$}}

\linethickness{0.05mm}
\multiput(85,25)(0,1.9){11}{\line(0,1){0.95}}
\linethickness{0.05mm}
\multiput(70,25)(0,1.9){11}{\line(0,1){0.95}}
\linethickness{0.05mm}
\multiput(50,25)(0,1.9){11}{\line(0,1){0.95}}
\linethickness{0.05mm}
\multiput(105,25)(0,1.9){11}{\line(0,1){0.95}}
\linethickness{0.05mm}
\multiput(70,5)(0,2){8}{\line(0,1){1}}
\linethickness{0.05mm}
\multiput(85,5)(0,2){8}{\line(0,1){1}}
\linethickness{0.05mm}
\put(135,20.62){\line(0,1){23.76}}
\put(135,44.38){\vector(0,1){0.12}}
\put(135,20.62){\vector(0,-1){0.12}}
\linethickness{0.05mm}
\put(135,0.62){\line(0,1){18.76}}
\put(135,19.38){\vector(0,1){0.12}}
\put(135,0.62){\vector(0,-1){0.12}}
\put(147.5,32.5){\makebox(0,0)[cc]{$q$ iterations}}

\put(147.5,10){\makebox(0,0)[cc]{$p$ iterations}}

\linethickness{0.3mm}
\put(50,25){\line(1,0){20}}
\put(50,20){\line(0,1){5}}
\put(70,20){\line(0,1){5}}
\put(50,20){\line(1,0){20}}
\linethickness{0.3mm}
\put(70,25){\line(1,0){15}}
\put(70,20){\line(0,1){5}}
\put(85,20){\line(0,1){5}}
\put(70,20){\line(1,0){15}}
\linethickness{0.3mm}
\put(85,25){\line(1,0){20}}
\put(85,20){\line(0,1){5}}
\put(105,20){\line(0,1){5}}
\put(85,20){\line(1,0){20}}
\linethickness{0.3mm}
\put(50,50){\line(1,0){20}}
\put(50,45){\line(0,1){5}}
\put(70,45){\line(0,1){5}}
\put(50,45){\line(1,0){20}}
\linethickness{0.3mm}
\put(70,50){\line(1,0){15}}
\put(70,45){\line(0,1){5}}
\put(85,45){\line(0,1){5}}
\put(70,45){\line(1,0){15}}
\linethickness{0.3mm}
\put(85,50){\line(1,0){20}}
\put(85,45){\line(0,1){5}}
\put(105,45){\line(0,1){5}}
\put(85,45){\line(1,0){20}}
\end{picture}}
 \caption{A strongly blocking word.}
 \label{motsfortementbloquants}
\end{figure}
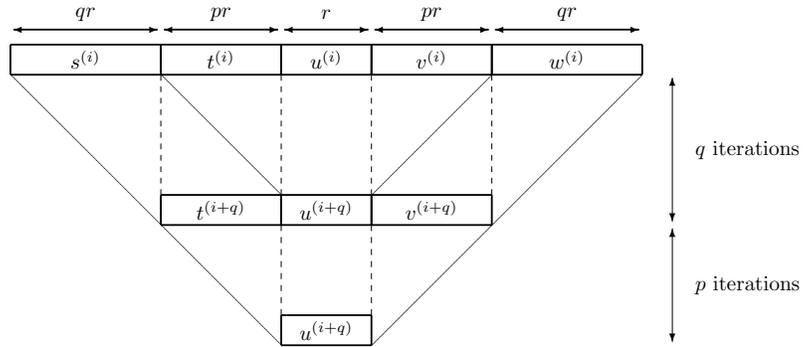
\end{center}
Summarizing, for all $i\in\n$, $u^{(i)}$ is determined by the
word $x_{[0,(2q+2p+1)r)}$ which is then strongly $r$-blocking.
Since $x$ had been chosen arbitrarily, $(2)$ is true.\\
(2) $\Rightarrow$ (3). Since any word of length $k$ is strongly $r$-blocking, any word $u\in  A^{2k+1}$ of length $2k+1$ is ($2r+1$)-blocking, \ie, roughly speaking, $u$ blocks the column $(H^t(z)_{[d,d+2r]})_{t\in\n}$ which appears under itself inside the dynamical evolution of $H$ from any configuration $z\in[u]_0$. 

For any $u\in A^{2k+1}$, consider the configuration $y={}^\infty u^\infty \in [u]_{-k}$ (bi-infinite concatenation of $u$). There exist $q_u$ and $p_u$ such that $F^{p_u + q_u}(y) = F^{q_u}(y)$. 
Set 
$$
q = \max \{ q_u : u \in A^{2k+1} \} \text{ and } p=lcm \{ p_u : u \in A^{2k+1} \}
$$
For any word $u\in  A^{2k+1}$, the column blocked by $u$ admits $q$ and $p$ as pre-period and period, respectively.

Let $H$ be a d\nca of default rule $f$ and let $n$ be such that $\forall i, |i| > n,  h_i = f$.
Choose $x \in A^{\mathbb{Z}}$ and $i \in \mathbb{Z}$ such that  $|i| > n + k$. So $h_{i-k} = \dots = h_{i+k} = f$ and $x_{[i-k, i+k]}$ is a strongly blocking word, then $H^{p+q}(x)_i = H^q(x)_i$.
On the other hand, the sequence $(H^j(x)_{[-n-k,n+k]})_{j \in \mathbb{N}}$ is completely determined by $u = x_{[-m,m]}$ where $m = n+2k+\max \{r_i : i \in \mathbb{Z}\}$ (see Figure~\ref{motsfortementbloquantsbis}).
\begin{center}
\begin{figure}[ht]
 \scalebox{0.8}{\includegraphics{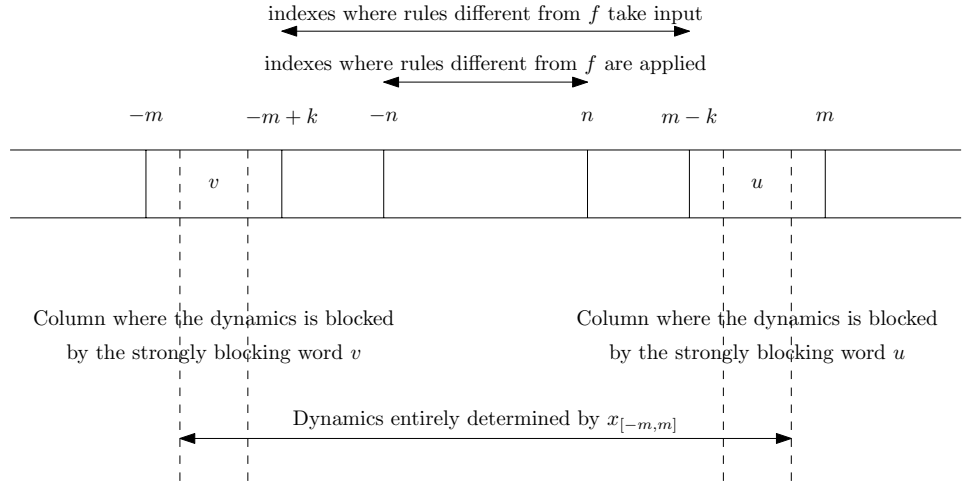}}
 \caption{Dynamics of a d\nca in presence of strongly blocking words.}
 \label{motsfortementbloquantsbis}
\end{figure}
\end{center}
Moreover, there exist $\alpha_u > 0$ and $\beta_u \geq q$ such that
$$
H^{\beta_u}(x)_{[-n-k,n+k]} = H^{\beta_u + p\alpha_u}(x)_{[-n-k,n+k]}
$$
leading to $H^{\beta_u}(x) = H^{\beta_u + p\alpha_u}(x)$. Set now
$$q' = \max \{ \beta_u : u \in A^{2m+1} \} \text{ and } p'=lcm \{ p\alpha_u : u \in A^{2m+1} \}.$$ 
Hence, $\forall x \in A^{\mathbb{Z}}, H^{q'+p'}(x) = H^{q'}(x)$ and $H$ is ultimately periodic.
\\
$(3) \Rightarrow (1)$ Since $H$ is an ultimately periodic d$\nca$ of default rule $f$, the CA $F$ is ultimately periodic too. By~\cite[Th. 4]{kurka97}, $F$ is equicontinuous.
\qed
\end{proof}

\begin{theorem}\label{th:compalmosteq}
Let $F$ be a CA with local rule $f$ admitting a strongly
$r$-blocking word $u$. Let $H$ be a r\nca of radius $r$.  If $H$
is $|u|$-compatible with $f$ then $H$ is almost equicontinuous.
\end{theorem}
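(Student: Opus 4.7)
I would show that $H$ admits a residual set of equicontinuity points, exploiting the fact that occurrences of $u$ at positions where $f$ is the governing local rule create ``blocked columns'' that shield finite regions of the lattice from perturbations.

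The first step is to isolate the set of \emph{good positions}
\[
G = \{ j \in \z : h_{j+i} = f \text{ for all } i \in [0, |u|) \},
\]
which is unbounded above and below by the $|u|$-compatibility of $H$ with $f$. For each $j \in G$, the cylinder $U_j = [u]_j$ is open, and for each $k \in \n$ the unions
\[
E_k^+ = \bigcup_{\substack{j \in G \\ j \geq k}} U_j \quad\text{and}\quad E_k^- = \bigcup_{\substack{j \in G \\ j \leq -k}} U_j
\]
are open and dense in $\az$: any cylinder $[w]_m$ can be completed by planting $u$ at a position $j \in G$ far beyond the support of $w$. Therefore $E = \bigcap_{k \in \n} (E_k^+ \cap E_k^-)$ is a dense $G_\delta$, hence residual.

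The second step is to verify that every $x \in E$ is an equicontinuity point. Given $\varepsilon > 0$, fix $K$ large enough that agreement on $[-K, K]$ forces distance less than $\varepsilon$, and choose $j_1 \in G$ with $j_1 > K$ and $j_2 \in G$ with $j_2 < -K - |u|$ such that $x \in U_{j_1} \cap U_{j_2}$. To invoke the strongly blocking property (stated for cylinders at position $0$) at these shifted positions, I would conjugate $H$ by $\sigma^{j_s}$ to obtain an r\nca $H'_s$ whose first $|u|$ local rules are $f$, apply the definition of strongly $r$-blocking to $H'_s$, and translate back. This yields an offset $d \in [0, |u| - r]$ and two blocked columns $[j_s + d, j_s + d + r)$ on which $H^n(y)$ and $H^n(x)$ coincide for all $n \geq 0$, provided $y$ and $x$ agree at both occurrences of $u$. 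A straightforward induction on $n$, using that each blocked column has width exactly the radius $r$ of $H$, then propagates agreement of $H^n(y)$ and $H^n(x)$ from the boundary columns to the full interior interval between them, which contains $[-K, K]$. Choosing $\delta$ small enough to force initial agreement of $y$ and $x$ on $[j_2, j_1 + |u|)$ completes the verification.

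The main obstacle I anticipate is mostly notational: correctly translating the strongly blocking property --- phrased for a \nca with $f$ on $[0, |u|)$ and configurations in $[u]_0$ --- to arbitrary good positions $j \in G$ via shift-conjugation, and checking that two radius-width blocked columns really do insulate $[-K, K]$ under the radius-$r$ dynamics of $H$ for all times. Once this shielding principle is in place, the argument is a direct application of Baire category.
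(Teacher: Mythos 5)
Your argument is correct and follows essentially the same route as the paper's proof: the paper's sets $T_{u,k}$ are exactly your $E_k^+\cap E_k^-$, the residual set is the same intersection, and the verification of equicontinuity via two flanking strongly blocked columns of width $r$ is identical (the paper simply leaves the shift-translation and the inward induction implicit). No changes needed.
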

\begin{proof}
Let $p$ and $n$ be the offset and the length of $u$, respectively.
For any $k\in\n$, consider the set $T_{u,k}$ of configurations
$x\in\az$ having the following property $\mathcal{P}$: there exist
$l>k$ and $m<-k$ such that $x_{[l,l + n)}=x_{[m,m + n)}= u$ and
$\forall i \in [l,l+n) \cup [m,m+n) \; h_i=f$.
\ignore{\begin{equation} \label{eq:tuk} x_{[l,l + n)}=x_{[m,m +
n)}= u \quad \wedge\quad \forall i \in [l,l+n) \cup [m,m+n) \;
h_i=f\enspace.
\end{equation}
} Remark that $T_{u,k}$ is open, being a union of cylinders.
Clearly, each $T_{u,k}$ is dense, thus the set $T_u = \bigcap_{k
\geq n} T_{u,k}$ is residual. We claim that any configuration in
$T_u$ is an equicontinuity point. Indeed, choose arbitrarily $x
\in T_u$. Set $\epsilon = 2^{-k}$, where $k\in\n$ is such that
$x\in T_{u,k}$. Then, there exist $k_1> k$ and $k_2 <
-k-n$ satisfying $\mathcal{P}$ (see Figure~\ref{pointequicontinuite}).\\
\begin{figure}[ht]
\centering \scalebox{0.7}{\ifx\JPicScale\undefined\def\JPicScale{1}\fi
\unitlength \JPicScale mm
\begin{picture}(140,57.5)(0,0)
\linethickness{0.3mm}
\put(50,40){\line(1,0){40}}
\put(50,35){\line(0,1){5}}
\put(90,35){\line(0,1){5}}
\put(50,35){\line(1,0){40}}
\linethickness{0.3mm}
\put(65,42.5){\line(0,1){5}}
\put(65,42.5){\vector(0,-1){0.12}}
\linethickness{0.3mm}
\put(75,42.5){\line(0,1){5}}
\put(75,42.5){\vector(0,-1){0.12}}
\linethickness{0.3mm}
\put(0,40){\line(1,0){50}}
\put(0,35){\line(0,1){5}}
\put(50,35){\line(0,1){5}}
\put(0,35){\line(1,0){50}}
\put(25,37.5){\makebox(0,0)[cc]{$u$}}

\linethickness{0.3mm}
\put(50,42.5){\line(0,1){5}}
\put(50,42.5){\vector(0,-1){0.12}}
\linethickness{0.3mm}
\put(90,42.5){\line(0,1){5}}
\put(90,42.5){\vector(0,-1){0.12}}
\linethickness{0.3mm}
\put(15,42.5){\line(0,1){5}}
\put(15,42.5){\vector(0,-1){0.12}}
\linethickness{0.3mm}
\put(35,42.5){\line(0,1){5}}
\put(35,42.5){\vector(0,-1){0.12}}
\linethickness{0.3mm}
\put(-0,42.5){\line(0,1){5}}
\put(-0,42.5){\vector(0,-1){0.12}}
\put(-0,50){\makebox(0,0)[cc]{$k_2$}}

\put(15,50){\makebox(0,0)[cc]{$k_2 + p$}}

\put(35,50){\makebox(0,0)[cc]{$k_2 + p + r$}}

\put(50,50){\makebox(0,0)[cc]{$k_2 + |u|$}}

\put(65,50){\makebox(0,0)[cc]{$-k$}}

\put(75,50){\makebox(0,0)[cc]{$k$}}

\put(90,50){\makebox(0,0)[cc]{$k_1$}}

\linethickness{0.3mm}
\put(0,55){\line(1,0){50}}
\put(50,55){\vector(1,0){0.12}}
\put(0,55){\vector(-1,0){0.12}}
\put(25,57.5){\makebox(0,0)[cc]{$f$}}

\linethickness{0.3mm}
\put(15,5){\line(0,1){30}}
\linethickness{0.3mm}
\put(35,5){\line(0,1){30}}
\linethickness{0.3mm}
\multiput(25,0)(0,1.82){6}{\line(0,1){0.91}}
\linethickness{0.3mm}
\put(90,40){\line(1,0){50}}
\put(90,35){\line(0,1){5}}
\put(140,35){\line(0,1){5}}
\put(90,35){\line(1,0){50}}
\put(115,37.5){\makebox(0,0)[cc]{$u$}}

\linethickness{0.3mm}
\put(105,42.5){\line(0,1){5}}
\put(105,42.5){\vector(0,-1){0.12}}
\linethickness{0.3mm}
\put(125,42.5){\line(0,1){5}}
\put(125,42.5){\vector(0,-1){0.12}}
\linethickness{0.3mm}
\put(140,42.5){\line(0,1){5}}
\put(140,42.5){\vector(0,-1){0.12}}
\put(105,50){\makebox(0,0)[cc]{$k_1 + p$}}

\put(125,50){\makebox(0,0)[cc]{$k_1 + p +r$}}

\put(140,50){\makebox(0,0)[cc]{$k_1 + |u|$}}

\linethickness{0.3mm}
\put(90,55){\line(1,0){50}}
\put(140,55){\vector(1,0){0.12}}
\put(90,55){\vector(-1,0){0.12}}
\put(115,57.5){\makebox(0,0)[cc]{$f$}}

\linethickness{0.3mm}
\put(105,5){\line(0,1){30}}
\linethickness{0.3mm}
\put(125,5){\line(0,1){30}}
\linethickness{0.3mm}
\multiput(115,0)(0,1.82){6}{\line(0,1){0.91}}
\end{picture}} \caption{An
equicontinuity point (see Theorem~\protect\ref{th:compalmosteq}).}
\label{pointequicontinuite}
\end{figure}
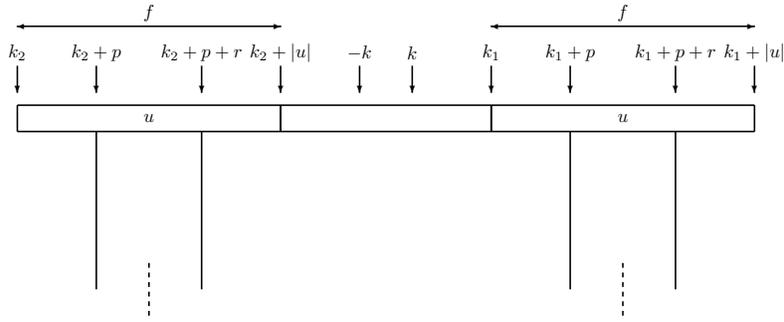
\\
Fix $\delta = \min \{2^{-(k_1+n)},2^{-k_2}\}$ and let $y\in\az$
be such that $d(x,y) < \delta$. Then $y_{[k_2,k_1 + |u|)} =
x_{[k_2,k_1 + |u|)}$. Since $u$ is $r$-blocking, $\forall t\in\n$,
$H^t(x)$ and $H^t(y)$ are equal inside the intervals
$[k_1+p,k_1+p+r]$
 and $[k_2+p,k_2+p+r]$,
then $d(H^t(x),H^t(y)) < \epsilon$.\qed
\end{proof}

In a similar manner one can prove the following.

\begin{theorem}
Let $F$ be an equicontinuous CA  of local rule $f$. Let $k\in\n$
be as in Proposition~\ref{prop:kequi}. Any r\nca $k$-compatible
with $f$ is equicontinuous.
\end{theorem}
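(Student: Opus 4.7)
The plan is to closely follow the pattern of Theorem~\ref{th:compalmosteq}, strengthening the conclusion to equicontinuity at \emph{every} configuration rather than at merely a residual set. The essential upgrade comes from Proposition~\ref{prop:kequi}(2): since $F$ is equicontinuous, \emph{every} word of length $k$ is strongly $r$-blocking for $F$. In the almost-equicontinuous setting of Theorem~\ref{th:compalmosteq} we had only a single blocking word $u$ and could handle only configurations containing it, which yielded a residual set of equicontinuity points; here that restriction evaporates because whatever symbols $x$ happens to carry in any $f$-region of length $k$ will already form a blocking word.

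Concretely, for an arbitrary $x \in \az$ and $\varepsilon = 2^{-m}$, I would use the $k$-compatibility of $H$ with $f$ to pick integers $k_1 > m$ and $k_2 < -m - k$ with $h_i = f$ on $[k_1, k_1 + k) \cup [k_2, k_2 + k)$. The two blocks $u_1 = x_{[k_1, k_1 + k)}$ and $u_2 = x_{[k_2, k_2 + k)}$ are then strongly $r$-blocking, with some offsets $d_1, d_2 \in [0, k - r]$, producing two blocked columns $C_1 = [k_1 + d_1, k_1 + d_1 + r)$ and $C_2 = [k_2 + d_2, k_2 + d_2 + r)$ that flank $[-m, m]$.

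Choosing $\delta$ small enough that $d(x, y) < \delta$ forces $y_{[k_2, k_1 + k)} = x_{[k_2, k_1 + k)}$, the defining property of strong blocking immediately gives $H^t(x)$ and $H^t(y)$ agreeing on $C_1 \cup C_2$ for every $t \in \n$. I would then run a straightforward induction on $t$ to propagate this agreement inward: since $H$ has radius $r$ and each blocked column has width exactly $r$, each cell in the strip $J = [k_2 + d_2 + r, k_1 + d_1)$ is updated from a window of width $2r + 1$ entirely contained in $C_2 \cup J \cup C_1$, where all cells already coincide at step $t$. Because $[-m, m] \subseteq J$, this yields $H^t(x)_{[-m, m]} = H^t(y)_{[-m, m]}$ for every $t$, so $x$ is an equicontinuity point; compactness of $\az$ then promotes this to uniform equicontinuity of $H$.

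The only delicate point is the shielding induction. It is essential that the columns have width \emph{exactly} $r$ matching the radius of $H$, so that the update window of any cell adjacent to a column fits entirely within $C_2 \cup J \cup C_1$ and never consults cells outside the region controlled by the two blocking words; this is the one structural check to verify carefully. Beyond that, no machinery other than what was already developed for Theorem~\ref{th:compalmosteq} is required.
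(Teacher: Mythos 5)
Your proof is correct and follows exactly the route the paper intends: the paper gives no details for this theorem, merely noting that it can be proved ``in a similar manner'' to Theorem~\ref{th:compalmosteq}, and your elaboration --- replacing the single blocking word by two arbitrary length-$k$ blocks of $x$, which Proposition~\ref{prop:kequi}(2) guarantees are strongly $r$-blocking, and then shielding $[-m,m]$ between the two blocked columns of width $r$ --- is precisely that argument carried out at every configuration. The structural check that each update window of a cell in the inner strip stays inside the union of the strip and the two columns is sound, so nothing further is needed.
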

\subsection{Perturbing almost equicontinuous CA}
In the sequel, we show how the loss of uniformity may lead to a dramatic change in the dynamical behavior of the automata network.

\begin{example}[An almost equicontinuous CA]
\label{exa:caeq}
Let $A = \{0,1,2\}$ and define a local rule $f: A^3\to A$ as
follows: $\forall x, y \in A$,\[
\begin{array}{c}
f(x,0,y)=
  \begin{cases}
  1 &\text{if $x = 1$ or $y = 1$}\\
  0 &\text{otherwise}
  \end{cases}
\\
f(x,1,y) =
  \begin{cases}
  2 &\text{if $x = 2$ or $y = 2$}\\
  1 &\text{otherwise}
  \end{cases}
\\
f(x,2,y) =
  \begin{cases}
  0 &\text{if $x = 1$ or $y = 1$}\\
  2 &\text{otherwise}\enspace.
  \end{cases}
\end{array}
\]
 
\end{example}
We show that the CA defined in Example~\ref{exa:caeq} is almost equicontinuous.
\ignore{
\begin{proposition}
The CA defined in Example~\ref{exa:caeq} is almost equicontinuous.
\end{proposition}
}

\begin{proof}
Just remark that the number of 0s inside the word $20^i2$ is
non-decreasing. Thus $202$ is
a $1$-blocking word (see Figure~\ref{motifpreserve}).
\qed
\end{proof}

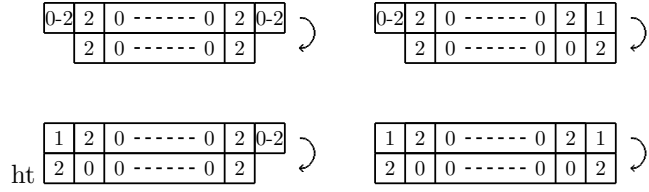
\begin{figure}{ht}
\centering \scalebox{0.8}{\ifx\JPicScale\undefined\def\JPicScale{1}\fi
\unitlength \JPicScale mm
\begin{picture}(100,30)(0,0)
\linethickness{0.3mm}
\put(5,10){\line(1,0){30}}
\put(5,5){\line(0,1){5}}
\put(35,5){\line(0,1){5}}
\put(5,5){\line(1,0){30}}
\linethickness{0.3mm}
\put(10,5){\line(0,1){5}}
\linethickness{0.3mm}
\put(30,5){\line(0,1){5}}
\put(7.5,7.5){\makebox(0,0)[cc]{2}}

\put(32.5,7.5){\makebox(0,0)[cc]{2}}

\put(12.5,7.5){\makebox(0,0)[cc]{0}}

\put(27.5,7.5){\makebox(0,0)[cc]{0}}

\linethickness{0.3mm}
\multiput(15,7.5)(1.82,0){6}{\line(1,0){0.91}}
\linethickness{0.3mm}
\put(5,30){\line(1,0){30}}
\put(5,25){\line(0,1){5}}
\put(35,25){\line(0,1){5}}
\put(5,25){\line(1,0){30}}
\linethickness{0.3mm}
\put(10,25){\line(0,1){5}}
\linethickness{0.3mm}
\put(30,25){\line(0,1){5}}
\put(7.5,27.5){\makebox(0,0)[cc]{2}}

\put(32.5,27.5){\makebox(0,0)[cc]{2}}

\put(12.5,27.5){\makebox(0,0)[cc]{0}}

\put(27.5,27.5){\makebox(0,0)[cc]{0}}

\linethickness{0.3mm}
\multiput(15,27.5)(1.82,0){6}{\line(1,0){0.91}}
\linethickness{0.3mm}
\put(5,25){\line(1,0){30}}
\put(5,20){\line(0,1){5}}
\put(35,20){\line(0,1){5}}
\put(5,20){\line(1,0){30}}
\linethickness{0.3mm}
\put(10,20){\line(0,1){5}}
\linethickness{0.3mm}
\put(30,20){\line(0,1){5}}
\put(7.5,22.5){\makebox(0,0)[cc]{2}}

\put(32.5,22.5){\makebox(0,0)[cc]{2}}

\put(12.5,22.5){\makebox(0,0)[cc]{0}}

\put(27.5,22.5){\makebox(0,0)[cc]{0}}

\linethickness{0.3mm}
\multiput(15,22.5)(1.82,0){6}{\line(1,0){0.91}}
\linethickness{0.3mm}
\put(0,5){\line(1,0){35}}
\put(0,0){\line(0,1){5}}
\put(35,0){\line(0,1){5}}
\put(0,0){\line(1,0){35}}
\linethickness{0.3mm}
\put(10,0){\line(0,1){5}}
\linethickness{0.3mm}
\put(30,0){\line(0,1){5}}
\put(2.5,2.5){\makebox(0,0)[cc]{2}}

\put(32.5,2.5){\makebox(0,0)[cc]{2}}

\put(12.5,2.5){\makebox(0,0)[cc]{0}}

\put(27.5,2.5){\makebox(0,0)[cc]{0}}

\linethickness{0.3mm}
\multiput(15,2.5)(1.82,0){6}{\line(1,0){0.91}}
\put(7.5,2.5){\makebox(0,0)[cc]{0}}

\linethickness{0.3mm}
\put(5,0){\line(0,1){5}}
\linethickness{0.3mm}
\multiput(42.5,22.5)(0.49,0.05){1}{\line(1,0){0.49}}
\multiput(42.99,22.55)(0.47,0.14){1}{\line(1,0){0.47}}
\multiput(43.46,22.69)(0.22,0.12){2}{\line(1,0){0.22}}
\multiput(43.89,22.92)(0.13,0.1){3}{\line(1,0){0.13}}
\multiput(44.27,23.23)(0.1,0.13){3}{\line(0,1){0.13}}
\multiput(44.58,23.61)(0.12,0.22){2}{\line(0,1){0.22}}
\multiput(44.81,24.04)(0.14,0.47){1}{\line(0,1){0.47}}
\multiput(44.95,24.51)(0.05,0.49){1}{\line(0,1){0.49}}
\multiput(44.95,25.49)(0.05,-0.49){1}{\line(0,-1){0.49}}
\multiput(44.81,25.96)(0.14,-0.47){1}{\line(0,-1){0.47}}
\multiput(44.58,26.39)(0.12,-0.22){2}{\line(0,-1){0.22}}
\multiput(44.27,26.77)(0.1,-0.13){3}{\line(0,-1){0.13}}
\multiput(43.89,27.08)(0.13,-0.1){3}{\line(1,0){0.13}}
\multiput(43.46,27.31)(0.22,-0.12){2}{\line(1,0){0.22}}
\multiput(42.99,27.45)(0.47,-0.14){1}{\line(1,0){0.47}}
\multiput(42.5,27.5)(0.49,-0.05){1}{\line(1,0){0.49}}

\linethickness{0.3mm}
\multiput(42.5,22.5)(0.12,0.12){5}{\line(0,1){0.12}}
\linethickness{0.3mm}
\multiput(42.5,22.5)(0.12,-0.12){5}{\line(0,-1){0.12}}
\linethickness{0.3mm}
\multiput(42.5,2.5)(0.49,0.05){1}{\line(1,0){0.49}}
\multiput(42.99,2.55)(0.47,0.14){1}{\line(1,0){0.47}}
\multiput(43.46,2.69)(0.22,0.12){2}{\line(1,0){0.22}}
\multiput(43.89,2.92)(0.13,0.1){3}{\line(1,0){0.13}}
\multiput(44.27,3.23)(0.1,0.13){3}{\line(0,1){0.13}}
\multiput(44.58,3.61)(0.12,0.22){2}{\line(0,1){0.22}}
\multiput(44.81,4.04)(0.14,0.47){1}{\line(0,1){0.47}}
\multiput(44.95,4.51)(0.05,0.49){1}{\line(0,1){0.49}}
\multiput(44.95,5.49)(0.05,-0.49){1}{\line(0,-1){0.49}}
\multiput(44.81,5.96)(0.14,-0.47){1}{\line(0,-1){0.47}}
\multiput(44.58,6.39)(0.12,-0.22){2}{\line(0,-1){0.22}}
\multiput(44.27,6.77)(0.1,-0.13){3}{\line(0,-1){0.13}}
\multiput(43.89,7.08)(0.13,-0.1){3}{\line(1,0){0.13}}
\multiput(43.46,7.31)(0.22,-0.12){2}{\line(1,0){0.22}}
\multiput(42.99,7.45)(0.47,-0.14){1}{\line(1,0){0.47}}
\multiput(42.5,7.5)(0.49,-0.05){1}{\line(1,0){0.49}}

\linethickness{0.3mm}
\multiput(42.5,2.5)(0.12,0.12){5}{\line(0,1){0.12}}
\linethickness{0.3mm}
\multiput(42.5,2.5)(0.12,-0.12){5}{\line(0,-1){0.12}}
\linethickness{0.3mm}
\put(35,30){\line(1,0){5}}
\put(35,25){\line(0,1){5}}
\put(40,25){\line(0,1){5}}
\put(35,25){\line(1,0){5}}
\linethickness{0.3mm}
\put(0,30){\line(1,0){5}}
\put(0,25){\line(0,1){5}}
\put(5,25){\line(0,1){5}}
\put(0,25){\line(1,0){5}}
\linethickness{0.3mm}
\put(0,10){\line(1,0){5}}
\put(0,5){\line(0,1){5}}
\put(5,5){\line(0,1){5}}
\put(0,5){\line(1,0){5}}
\linethickness{0.3mm}
\put(35,10){\line(1,0){5}}
\put(35,5){\line(0,1){5}}
\put(40,5){\line(0,1){5}}
\put(35,5){\line(1,0){5}}
\linethickness{0.3mm}
\put(60,30){\line(1,0){30}}
\put(60,25){\line(0,1){5}}
\put(90,25){\line(0,1){5}}
\put(60,25){\line(1,0){30}}
\linethickness{0.3mm}
\put(65,25){\line(0,1){5}}
\linethickness{0.3mm}
\put(85,25){\line(0,1){5}}
\put(62.5,27.5){\makebox(0,0)[cc]{2}}

\put(87.5,27.5){\makebox(0,0)[cc]{2}}

\put(67.5,27.5){\makebox(0,0)[cc]{0}}

\put(82.5,27.5){\makebox(0,0)[cc]{0}}

\linethickness{0.3mm}
\multiput(70,27.5)(1.82,0){6}{\line(1,0){0.91}}
\linethickness{0.3mm}
\put(60,25){\line(1,0){35}}
\put(60,20){\line(0,1){5}}
\put(95,20){\line(0,1){5}}
\put(60,20){\line(1,0){35}}
\linethickness{0.3mm}
\put(65,20){\line(0,1){5}}
\linethickness{0.3mm}
\put(85,20){\line(0,1){5}}
\put(62.5,22.5){\makebox(0,0)[cc]{2}}

\put(92.5,22.5){\makebox(0,0)[cc]{2}}

\put(67.5,22.5){\makebox(0,0)[cc]{0}}

\put(82.5,22.5){\makebox(0,0)[cc]{0}}

\linethickness{0.3mm}
\multiput(70,22.5)(1.82,0){6}{\line(1,0){0.91}}
\put(87.5,22.5){\makebox(0,0)[cc]{0}}

\linethickness{0.3mm}
\put(90,20){\line(0,1){5}}
\linethickness{0.3mm}
\multiput(97.5,22.5)(0.49,0.05){1}{\line(1,0){0.49}}
\multiput(97.99,22.55)(0.47,0.14){1}{\line(1,0){0.47}}
\multiput(98.46,22.69)(0.22,0.12){2}{\line(1,0){0.22}}
\multiput(98.89,22.92)(0.13,0.1){3}{\line(1,0){0.13}}
\multiput(99.27,23.23)(0.1,0.13){3}{\line(0,1){0.13}}
\multiput(99.58,23.61)(0.12,0.22){2}{\line(0,1){0.22}}
\multiput(99.81,24.04)(0.14,0.47){1}{\line(0,1){0.47}}
\multiput(99.95,24.51)(0.05,0.49){1}{\line(0,1){0.49}}
\multiput(99.95,25.49)(0.05,-0.49){1}{\line(0,-1){0.49}}
\multiput(99.81,25.96)(0.14,-0.47){1}{\line(0,-1){0.47}}
\multiput(99.58,26.39)(0.12,-0.22){2}{\line(0,-1){0.22}}
\multiput(99.27,26.77)(0.1,-0.13){3}{\line(0,-1){0.13}}
\multiput(98.89,27.08)(0.13,-0.1){3}{\line(1,0){0.13}}
\multiput(98.46,27.31)(0.22,-0.12){2}{\line(1,0){0.22}}
\multiput(97.99,27.45)(0.47,-0.14){1}{\line(1,0){0.47}}
\multiput(97.5,27.5)(0.49,-0.05){1}{\line(1,0){0.49}}

\linethickness{0.3mm}
\multiput(97.5,22.5)(0.12,0.12){5}{\line(1,0){0.12}}
\linethickness{0.3mm}
\multiput(97.5,22.5)(0.12,-0.12){5}{\line(1,0){0.12}}
\linethickness{0.3mm}
\put(55,30){\line(1,0){5}}
\put(55,25){\line(0,1){5}}
\put(60,25){\line(0,1){5}}
\put(55,25){\line(1,0){5}}
\linethickness{0.3mm}
\put(90,30){\line(1,0){5}}
\put(90,25){\line(0,1){5}}
\put(95,25){\line(0,1){5}}
\put(90,25){\line(1,0){5}}
\linethickness{0.3mm}
\put(60,10){\line(1,0){30}}
\put(60,5){\line(0,1){5}}
\put(90,5){\line(0,1){5}}
\put(60,5){\line(1,0){30}}
\linethickness{0.3mm}
\put(65,5){\line(0,1){5}}
\linethickness{0.3mm}
\put(85,5){\line(0,1){5}}
\put(62.5,7.5){\makebox(0,0)[cc]{2}}

\put(87.5,7.5){\makebox(0,0)[cc]{2}}

\put(67.5,7.5){\makebox(0,0)[cc]{0}}

\put(82.5,7.5){\makebox(0,0)[cc]{0}}

\linethickness{0.3mm}
\multiput(70,7.5)(1.82,0){6}{\line(1,0){0.91}}
\linethickness{0.3mm}
\put(55,5){\line(1,0){40}}
\put(55,0){\line(0,1){5}}
\put(95,0){\line(0,1){5}}
\put(55,0){\line(1,0){40}}
\linethickness{0.3mm}
\put(65,0){\line(0,1){5}}
\linethickness{0.3mm}
\put(85,0){\line(0,1){5}}
\put(57.5,2.5){\makebox(0,0)[cc]{2}}

\put(92.5,2.5){\makebox(0,0)[cc]{2}}

\put(67.5,2.5){\makebox(0,0)[cc]{0}}

\put(82.5,2.5){\makebox(0,0)[cc]{0}}

\linethickness{0.3mm}
\multiput(70,2.5)(1.82,0){6}{\line(1,0){0.91}}
\linethickness{0.3mm}
\put(90,0){\line(0,1){5}}
\linethickness{0.3mm}
\put(60,0){\line(0,1){5}}
\put(62.5,2.5){\makebox(0,0)[cc]{0}}

\put(87.5,2.5){\makebox(0,0)[cc]{0}}

\linethickness{0.3mm}
\multiput(97.5,2.5)(0.49,0.05){1}{\line(1,0){0.49}}
\multiput(97.99,2.55)(0.47,0.14){1}{\line(1,0){0.47}}
\multiput(98.46,2.69)(0.22,0.12){2}{\line(1,0){0.22}}
\multiput(98.89,2.92)(0.13,0.1){3}{\line(1,0){0.13}}
\multiput(99.27,3.23)(0.1,0.13){3}{\line(0,1){0.13}}
\multiput(99.58,3.61)(0.12,0.22){2}{\line(0,1){0.22}}
\multiput(99.81,4.04)(0.14,0.47){1}{\line(0,1){0.47}}
\multiput(99.95,4.51)(0.05,0.49){1}{\line(0,1){0.49}}
\multiput(99.95,5.49)(0.05,-0.49){1}{\line(0,-1){0.49}}
\multiput(99.81,5.96)(0.14,-0.47){1}{\line(0,-1){0.47}}
\multiput(99.58,6.39)(0.12,-0.22){2}{\line(0,-1){0.22}}
\multiput(99.27,6.77)(0.1,-0.13){3}{\line(0,-1){0.13}}
\multiput(98.89,7.08)(0.13,-0.1){3}{\line(1,0){0.13}}
\multiput(98.46,7.31)(0.22,-0.12){2}{\line(1,0){0.22}}
\multiput(97.99,7.45)(0.47,-0.14){1}{\line(1,0){0.47}}
\multiput(97.5,7.5)(0.49,-0.05){1}{\line(1,0){0.49}}

\linethickness{0.3mm}
\multiput(97.5,2.5)(0.12,0.12){5}{\line(1,0){0.12}}
\linethickness{0.3mm}
\multiput(97.5,2.5)(0.12,-0.12){5}{\line(1,0){0.12}}
\linethickness{0.3mm}
\put(55,10){\line(1,0){40}}
\put(55,5){\line(0,1){5}}
\put(95,5){\line(0,1){5}}
\put(55,5){\line(1,0){40}}
\put(2.5,7.5){\makebox(0,0)[cc]{1}}

\put(2.5,27.5){\makebox(0,0)[cc]{0-2}}

\put(92.5,27.5){\makebox(0,0)[cc]{1}}

\put(57.5,7.5){\makebox(0,0)[cc]{1}}

\put(92.5,7.5){\makebox(0,0)[cc]{1}}

\put(57.5,27.5){\makebox(0,0)[cc]{0-2}}

\put(37.5,27.5){\makebox(0,0)[cc]{0-2}}

\put(37.5,7.5){\makebox(0,0)[cc]{0-2}}

\end{picture}} \caption{Evolution of
words $20^i2$ according to $F^{(9)}$.} \label{motifpreserve}
\end{figure}

The following example defines a \nca which is sensitive to the
initial conditions although its default rule give rise to an
almost equicontinuous CA.
\begin{example}[A sensitive \nca with an almost equicontinuous default rule]\label{ex:aeqsens}\mbox{}\\
Consider the d\nca $\H:\az\to\az$ defined as follows
\[
\forall x \in\az, \forall i \in\z,\quad \H(x)_i =
\begin{cases}
1 & \text{if $i=0$} \\
f(x_{i-1},x_i,x_{i+1}) & \text{otherwise\enspace,}
\end{cases}
\]
where $f$ and $A$ are as in Example~\ref{exa:caeq}.
\end{example}
\begin{figure}
\begin{center}
\subfloat[][Dynamics of {$F^{(9)}$}]{\includegraphics[scale=0.5]{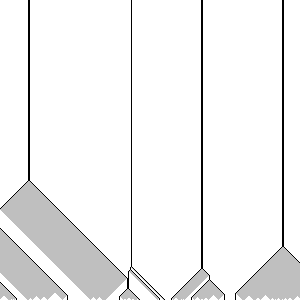}}\qquad
\subfloat[][Dynamics of {$\H$}]{\includegraphics[scale=0.5]{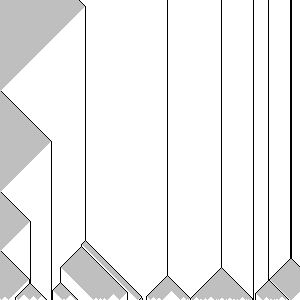}}\\
\subfloat[][Dynamics of {$\H$} on $u0^\infty$]{
\includegraphics[scale=0.5]{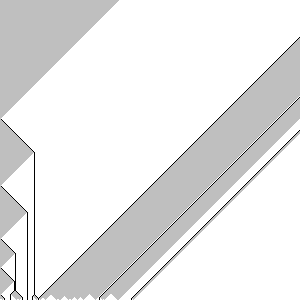}}\qquad
\subfloat[][Dynamics of {$\H$} on $u2^\infty$]{
\includegraphics[scale=0.5]{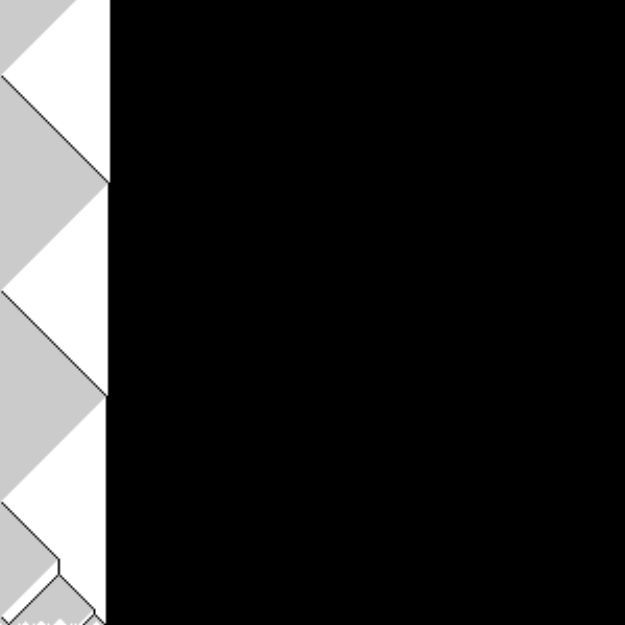}}
\caption{Space-time diagrams for $F^{(9)}$ and $\H$.}
\end{center}
\end{figure}
Remark that positive and negative cells do not interact each other under the action of $\H$. Therefore, in order to study the
behavior of \H, it is sufficient to consider the action of \H
on \an. In the sequel, we will simply note by $H$ the map \H.

In order to prove that \H is sensitive, we need some technical Lemmata.
\begin{lemma}
\label{regle3}
For any $u \in A^*$, consider the sequence $(u^{(n)})_{n \in \mathbb{N}}$ defined as:
\[
\begin{cases}
u^{(n+1)} = f(1u^{(n)}0) & \forall n\in\n \\
u^{(0)} = u2 & 
\end{cases}
\]
Then,
\[
\exists m,\, \forall n\geq m, \quad u^{(n)} = 1^{|u|+1}
\]
\ignore{
$$\lim_{n \rightarrow \infty} u^{(n)} = 1^{|u|+1}$$
}
\end{lemma}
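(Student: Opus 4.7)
My plan is to regard the recurrence as the iteration of a map $\phi : A^{|u|+1}\to A^{|u|+1}$ defined by $\phi(v) = f(1v0)$, acting on the finite state space $A^{|u|+1}$. A direct computation shows that $1^{|u|+1}$ is a fixed point of $\phi$: indeed, $f(1,1,1) = f(1,1,0) = 1$. It therefore suffices to prove that every orbit of $\phi$ eventually lands on this fixed point.

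I would split the argument in two phases. \emph{First phase: no $2$'s.} If $v \in \{0,1\}^{|u|+1}$, then $f(x,0,y),f(x,1,y) \in \{0,1\}$ for all $x,y \in \{0,1\}$, so no $2$ can be created from a $2$-free word, and $\phi(v) \in \{0,1\}^{|u|+1}$. Letting $p(v)$ denote the position of the leftmost $0$ in $v$ (with $p(1^{|u|+1}) = |u|+1$), the $1$ prepended by $\phi$ combined with $f(1,0,\cdot) = 1$ yields $p(\phi(v)) \geq p(v)+1$ whenever $v \neq 1^{|u|+1}$. Hence at most $|u|+1$ iterations of $\phi$ starting from within this phase reach the fixed point.

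\emph{Second phase: elimination of $2$'s.} The crux is to show that every orbit eventually enters the first phase. I would carry out a case analysis on the local context of each $2$ in $v$. A $2$ whose left neighbour is a $1$ (or that sits at position $0$, since the left boundary of $\phi$ is the symbol $1$) is annihilated while its left neighbour becomes a $2$, so the $2$ drifts leftward; symmetrically, a $2$ with a $1$ on its right drifts rightward; a $2$ flanked by two $1$'s splits into two $2$'s at the flanking cells; a $2$ whose neighbours are both in $\{0,2\}$ is stationary. Inside every $2$-free subblock, the left-boundary $1$ floods $1$'s rightward through the $0$'s exactly as in the first phase, so every stationary block of $2$'s is eventually reached by the advancing $1$-front and forced to drift. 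A $2$ drifting leftward is killed at position $0$; a $2$ drifting rightward is absorbed into the right-boundary $0$.

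The main obstacle is the splitting move, which can transiently increase the number of $2$'s and so rules out the naïve ``number of $2$'s'' Lyapunov function (for instance $121 \mapsto 202$ creates a new $2$). I would bypass this by using the finiteness of $A^{|u|+1}$: the orbit is ultimately periodic, and by the first phase any non-trivial periodic orbit must contain at least one $2$ at every step. Tracking the leftmost $2$ together with the advancing $1$-front on its left, a detailed case analysis excludes such a periodic orbit, so the orbit must absorb into the fixed point $1^{|u|+1}$. This gives the $m$ required by the statement.
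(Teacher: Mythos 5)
Your first phase is sound: $\{0,1\}^{|u|+1}$ is forward-invariant under $\phi$ (a $2$ can only be produced from a window already containing a $2$), the position of the leftmost $0$ increases at each step because the virtual left boundary is a $1$ and $f(1,0,\cdot)=1$, and $1^{|u|+1}$ is fixed. The reduction of the rest to ``no non-trivial periodic orbit'' via finiteness of $A^{|u|+1}$ is also a legitimate reframing. The problem is that all the difficulty of the lemma is concentrated in the step you dispose of with ``a detailed case analysis excludes such a periodic orbit'': no quantity is exhibited that actually rules such an orbit out. The na\"ive candidates fail — you note this yourself for the number of $2$'s (because of $121\mapsto 202$), but the position of the leftmost $2$ fails too, since it can move \emph{right} (e.g. $021\mapsto {\cdot}02$, and no new $2$ can appear further left when the cell before the $2$ is a $0$), so ``tracking the leftmost $2$ together with the $1$-front'' is not obviously monotone either. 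Moreover two of the local claims supporting the sketch are incorrect: a $2$ drifting rightward is \emph{not} absorbed at the right boundary — one computes $f(a,2,1)f(2,1,0)=02$ at the last two cells and then $f(0,2,0)=2$, so it parks there until a $1$ reaches its left; and the left-boundary $1$ does not flood ``every $2$-free subblock'' but only the leftmost one, since $f(2,0,0)=f(0,0,2)=0$, so $0$-blocks sitting between two $2$'s receive no $1$'s from that front until every $2$ to their left has already been destroyed.

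What the paper supplies, and what is missing here, are two explicit monotone certificates. It first tracks $i^{(n)}=\min\{i : u^{(n)}_i=2 \text{ and } u^{(n)}_{[i+1,|u|]}\in\{0,2\}^*\}$, which is well defined precisely because of the trailing $2$ in $u^{(0)}=u2$ (an hypothesis your argument never uses); a case analysis shows $i^{(n)}$ is non-increasing and strictly decreases after finitely many steps whenever positive, so it reaches $0$ and the word then contains no $1$ at all. From that point on the word keeps the invariant form $1^kv$ with $v\in\{0,2\}^*$; on this invariant set the block $121$ never occurs, so the splitting move is disabled, the number of $2$'s becomes non-increasing, and the explicit computation $1^k0^h2w\mapsto\cdots\mapsto 0^{h+k+1}w$ (after $2h+k+1$ steps) shows it strictly decreases until no $2$ remains; your phase one then finishes. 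Without producing such quantities, the ``case analysis on the periodic orbit'' you invoke is exactly the unproved content of the lemma, so the proposal has a genuine gap.
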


\begin{proof}
We proceed in 4 steps.
\begin{enumerate}
\item
First of all, we are going to show that there exists $n_0\in\n$ such that $u^{(n_0)}$ does not contain any $1$. In particular, we prove that there exists $n_0\in\n$ such that $\forall n\leq n_0$ the integer
\[
i^{(n)} = \min \left\{ i\leq |u| : u^{(n)}_{i} = 2 \text{ and } u^{(n)}_{[i + 1, |u|]} \in \{0,2\}^*\right\}
\]
is well defined with the property $P(n) = \left( \forall k\in[0, n), i^{(k + 1)} \leq i^{(k)}\right)$ and $i^{(n_0)} = 0$. By definition, $i^{(0)}$ is well defined and the property $P(0)$ is true. Suppose now that, for some $n\in\n$, $i^{(n)}$ is well defined and the property $P(n)$ is true. We deal with the following cases.
\begin{enumerate}
\item
If $i^{(n)} = 0$, then we set $n_0 = n$ and we are done.
\item
If $i^{(n)} \neq 0$ does not contain any $1$, then we can write $u^{(n)} = 0^{i^{(n)}}2w$ with $w\in\{0,2\}^*$ and we have $u^{(n + 1)} = 10^{i^{(n)} - 1}2w$. So, $i^{(n+1)} = i^{(n)}$ is well defined with $P(n+1)$ true and, having the element $n+1$ as a starting point, we fall in the next case.
\item
If $i^{(n)} \neq 0$ contains at least one $1$, let $h\in [0,i^{(n)})$ be the greatest position in which it appears and set $s=i^{(n)}-h-1$. We can write $u^{(n)} =v^{(n)}10^{s}2w$, for some $v^{(n)} \in A^{h}$ and $w\in\{0,2\}^*$. Then,  for each $j\in[1,s]$, we obtain $u^{(n + j)} = v^{(n + j)}10^{s - j}2w$, for some $v^{(n + j)} \in A^{h + j}$. So, for each $j\in[1,s]$,  $i^{(n + j)} = i^{(n)}$ is well defined with $P(n+j)$ true. Furthermore, it holds that $u^{(n+s+1)} = v^{(n+s+1)}20w$, for some $v^{(n+s+1)} \in A^{h+s+1}$, and 
$i^{(n+s+1)} = i^{(n+s+1)} - 1$ is well defined with $P(n+s+1)$ true, and we can reconsider the three cases having the element $n+s+1$ as starting point. 
\end{enumerate}
By considering iteratively the three cases, we are sure to reach a natural  $n_0$ such that $i^{(n_0)}=0$ since whenever we fall in the third case the value of $i^{(n)}$ decreases.  
\item
Proceeding by induction, we now show that
\[
 \forall n \geq n_0, \exists k \in \mathbb{N}, \exists v \in \{0,2\}^*,\quad\text{s.t.}\quad u^{(n)} = 1^{k}v.
 \] 
Clearly, this is true for $n=n_0$ with $k=0$. Assume now that the statement is true for some $n\geq n_0$ and consider the following cases. 
\begin{itemize}
\item
If $v = \epsilon$, then $u^{(n+1)} = u^{(n)} = 1^{k}v$
\item
If $v_0 = 0$, then $u^{(n+1)} = 1^{k + 1}v_{[1,|v|-1]}$
\item
If $v_0 = 2$ and $k \neq 0$, then $u^{(n+1)} = 1^{k - 1}20v_{[1,|v|-1]}$
\item
If $v_0 = 2$ and $k = 0$, then $u^{(n+1)} = 0v_{[1,|v|-1]}$
\end{itemize}
In all the cases, the statement is true for $n+1$. 

As a consequence, we also have that the number $|u^{(n)}|_2 $ of $2$ inside $u^{(n)}$ is a (non strictly) decreasing sequence:
\[
 \forall n \geq n_0, \quad |u^{(n+1)}|_2 \leq |u^{(n)}|_2
\]
Indeed,  $u^{(n)}$ does not contain the block $121$, which, transforming itself into $202$, is the unique one able to increase the number of $2$.
\item
We now prove that there exists $n_1\geq n_0$, such that $u^{(n_1)}$ no longer contains any $2$, and then $u^{(n_1)}=1\cdots 10\cdots 0$. This is assured by showing that
\[
\forall n \geq n_0, |u^{(n)}|_2 > 0 \Rightarrow \exists s \in \mathbb{N}, |u^{(n + s)}|_2 < |u^{(n)}|_2
\]
Let $n \geq n_0$ such that $|u^{(n)}|_2 > 0$. Since $u^{(n)} = 1^{k}v$ for some $k\in\n, v \in \{0,2\}^*$, we can write $u^{(n)} = 1^k0^{h}2w$ for some $h\in\n, w \in \{0,2\}^*$. Thus,  we have $u^{(n + h)} = 1^{k+h}2w$ and $u^{(n + h + i)} = 1^{k+h-i}20^iw$ for each $i\in [1,h+k]$. So, $u^{(n + 2h + k)} = 20^{h+k}w$ and, setting $s=2h+k+1$, we obtain $u^{(n + s)} = 0^{s-h}w$, assuring that $|u^{(n + s)}|_2 < |u^{(n)}|_2$.
\item
Since $u^{(n_1)}=1^k0^h$ for some $h,k\in\n$, it is easy to observe that $u^{(n_1+i)}=1^{k+i}0^{h-i}$, $i\in[1,h]$. In particular, setting $m=n_1+h$, we obtain $u^{(m)}=1^{|u|+1}$ and $\forall n\geq m, u^{(n)}=u^{(m)}$. This concludes the proof. 
\end{enumerate}
\end{proof}
\begin{lemma}
\label{rewriting}
Consider the rewriting system on $A^* \times \{0,1\}$ defined by the following rules:
\begin{enumerate}
\item
$(u0,0)  \xrightarrow{1}  (u,0)$
\item
$(u1,0)  \xrightarrow{2}  (u,1)$
\item
$(u2,0)  \xrightarrow{3}  (f(1u20),0)$
\item
$(u0,1)  \xrightarrow{4}  (f(1u0),1)$
\item
$(u1,1)  \xrightarrow{5}  (u,1)$
\item
$(u2,1)  \xrightarrow{6}  (f(1u2),0)$
\item
$(\epsilon, x)  \xrightarrow{7}  (\epsilon, 1)$
\end{enumerate}
Starting from any $(u,x)\in A^* \times \{0,1\}$, after a certain number $m$ of rule applications, the system ultimately falls into $(\epsilon, 1)$.
\end{lemma}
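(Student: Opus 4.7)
I will argue by strong induction on $|u|$. The base case $|u|=0$ is immediate: rule $7$ rewrites $(\epsilon,x)$ into $(\epsilon,1)$ in one step. For the inductive step, assume the claim holds for every word strictly shorter than $u$, and split according to the state $x$ and the last letter of $u$. In state $x=1$, each of the three applicable rules strictly decreases the length of the word: writing $u = u'\alpha$ with $\alpha \in A$, rule $4$ (if $\alpha=0$) produces $(f(1u'0),1)$, rule $5$ (if $\alpha=1$) produces $(u',1)$, and rule $6$ (if $\alpha=2$) produces $(f(1u'2),0)$; in the rules that use $f$ the fact that $f$ has radius $r=1$ gives an output of length $|u|-1$, and rule $5$ trivially shortens $u$ by one letter. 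The induction hypothesis then yields the desired reduction to $(\epsilon,1)$.

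Now consider state $x=0$. If $u$ ends in $0$ or $1$, rules $1$ or $2$ apply and strictly shorten $u$, so the induction hypothesis concludes. The delicate case is when $u$ ends in $2$, because rule $3$ preserves $|u|$. Here I invoke Lemma~\ref{regle3}. Write $u = v2$ and consider the sequence $(u^{(n)})_{n \in \mathbb{N}}$ given by $u^{(0)} = v2 = u$ and $u^{(n+1)} = f(1\,u^{(n)}\,0)$. As long as $u^{(n)}$ still ends in $2$, successive applications of rule $3$ in the rewriting system exactly trace this sequence, since the output $f(1(v')20) = f(1 u^{(n)} 0)$ of rule $3$ coincides with $u^{(n+1)}$. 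By Lemma~\ref{regle3} there exists $m$ with $u^{(m)} = 1^{|v|+1}$, a word ending in $1$. Let $n^{\star}$ be the least index for which $u^{(n^{\star})}$ does not end in $2$; such an index exists and satisfies $n^{\star} \leq m$. After $n^{\star}$ applications of rule $3$ we reach $(u^{(n^{\star})}, 0)$ with $|u^{(n^{\star})}| = |u|$, and then rule $1$ or rule $2$ applies and produces a word of length $|u|-1$; the induction hypothesis closes the argument.

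The only non-routine step is the sub-case $(u,0)$ with $u$ ending in $2$, in which one must exclude an infinite chain of rule $3$ rewrites; this is precisely the content of Lemma~\ref{regle3}, which forces the iterated sequence to reach a word no longer ending in $2$ in finitely many steps. Everything else amounts to the bookkeeping verification that the applicable rule strictly shortens $u$, with rule $7$ absorbing the terminal situation $u=\epsilon$.
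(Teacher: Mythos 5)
Your proof is correct and rests on the same two pillars as the paper's own argument: every rule except rule 3 (and the trivial rule 7) strictly decreases $|u|$, and Lemma~\ref{regle3} guarantees that a consecutive run of rule-3 applications must terminate, because the iterated word eventually equals $1^{|u|}$ and therefore stops ending in $2$. The paper packages this as ``the length sequence is non-increasing, hence eventually constant, and a constant positive value would force rule 3 to apply forever, contradicting Lemma~\ref{regle3},'' whereas you organize it as a strong induction on $|u|$ --- a presentational difference only.
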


\begin{proof}
This system is non ambiguous and then, for any $(u,x)\in A^* \times \{0,1\}$,  it (well) defines the sequence $(u^{(n)}, x^{(n)})_ {n \in \mathbb{N}}$ such that 
\[
\begin{cases}
(u^{(n)}, x^{(n)}) \rightarrow (u^{(n+1)}, x^{(n+1)}) & \forall n\in\n \\
(u^{(0)},  x^{(0)}) = (u,x) & 
\end{cases}
\]
where $\rightarrow$ is the unique possible application of a system rule.
Consider the sequence $(l^{(n)})_ {n \in \mathbb{N}}=(|u^{(n)}|)_ {n \in \mathbb{N}}$. By definition, it is a (non strictly) decreasing sequence and then it converges to some $l\in\n$, or, equivalently, there exists $m\in\n$ such that $\forall n\geq m$, $l^{(n)}=l$. We show that $l=0$ and this also prove the thesis.  For a sake of argument, suppose that $l > 0$. Thus, there exists $k\in\n$ such that $\forall n \geq k, (u^{(n)}, x^{(n)}) \xrightarrow{3} (u^{(n+1)}, x^{(n+1)})$ since, except rule 7, rule 3 is the only one leaving $l^{(n)}$ unchanged. Furthermore, the sequence $(u^{(n+k)})_{n \in \mathbb{N}}$ verifies the hypothesis of Lemma \ref{regle3} and so it is ultimately equal to $1^l$, that is contrary to the fact that $\forall n \geq k, u^{(n)}_l = 2$, since rule 3 is always applied 
\end{proof}

\begin{lemma}
\label{interdits}
Let $\mathcal{F} = \{01, 12, 20, 22\}$. For any $x \in A^{\mathbb{N}}$ and any $i\in\n$, if no element of $\mathcal{F}$ appears inside  $x_{[i, \infty)}$, then no element of $\mathcal{F}$ appears inside $\H(x)_{[i+1, \infty)}$.
\end{lemma}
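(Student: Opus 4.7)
The plan is to reduce the claim to a finite combinatorial check on the local rule $f$ of Example~\ref{exa:caeq}. First I would observe that for every $j\ge i+1\ge 1$ one has $\H(x)_j=f(x_{j-1},x_j,x_{j+1})$, because the only position at which $\H$ differs from the CA with rule $f$ is $0$, which is never reached when $j\ge 1$. Consequently the pair $\H(x)_j\H(x)_{j+1}$ is determined by the four-letter window $x_{j-1}x_jx_{j+1}x_{j+2}$, and under the hypothesis of the lemma all three of its consecutive sub-pairs lie in $A^2\setminus\mathcal{F}=\{00,02,10,11,21\}$.

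The statement therefore reduces to the purely combinatorial claim: whenever $abcd\in A^4$ satisfies $ab,bc,cd\in\{00,02,10,11,21\}$, the pair $f(a,b,c)f(b,c,d)$ again lies in $\{00,02,10,11,21\}$. I would prove this by case-splitting on the middle pair $bc$, which takes only five values. For each such value the admissible left extensions $a$ (those with $ab\notin\mathcal{F}$) and right extensions $d$ (those with $cd\notin\mathcal{F}$) form a short list, and the value of $f$ is then read directly from its piecewise definition. Typically: when $bc\in\{00,02\}$ the image pair falls in $\{00,10\}$; when $bc\in\{10,11\}$ it falls in $\{11,21\}$; and when $bc=21$ the only admissible $a$ is $0$, giving $f(0,2,1)=0$, while any admissible $d\in\{0,1\}$ gives $f(2,1,d)=2$, producing the pair $02$.

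The main obstacle is mere bookkeeping; no auxiliary lemma is needed. What makes the check succeed is the interplay between the three clauses defining $f$: a cell in state $0$ becomes $1$ only when it neighbours a $1$, a cell in state $1$ becomes $2$ only when it neighbours a $2$, and a cell in state $2$ becomes $0$ only when it neighbours a $1$. Exactly this pattern forces the set $A^2\setminus\mathcal{F}$ of admissible consecutive pairs to be forward-invariant along the one-sided orbit of $\H$.
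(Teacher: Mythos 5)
Your proof is correct and takes essentially the same route as the paper's: both reduce the lemma to a finite verification over four-letter windows, using the fact that the non-uniform map agrees with the radius-one CA of rule $f$ at every position $j\ge 1$, so that $H(x)_jH(x)_{j+1}$ is determined by $x_{j-1}x_jx_{j+1}x_{j+2}$. The only difference is the direction of the check --- the paper enumerates the $f$-preimages of each word of $\mathcal{F}$ and observes that each one contains a factor in $\mathcal{F}$, while you verify the contrapositive by case-splitting on the middle pair of an admissible window --- and your case analysis (images in $\{00,10\}$, $\{11,21\}$, or $\{02\}$) checks out.
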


\begin{proof}
The $f$--pre-images of words in $\mathcal{F}$ are :
\begin{itemize}
\item
$f^{-1}(01)$ = \{0001, 1201, 2001\}
\item
$f^{-1}(12)$ = \{0012, 0112, 1012, 1020, 1022, 1112, 2012\}
\item
$f^{-1}(20)$ = \{0120, 0121, 0122, 0200, 0202, 0221, 1120, 1121, 1122, 2120, 2121, 2122, 2200, 2202, 2221\}
\item
$f^{-1}(22)$ = \{0220, 0222, 2112, 2220, 2222\}
\end{itemize}
So, if there exists $w\in\mathcal{F}$ appearing in $\H(x)_{[i+1, \infty)}$, necessarily a word $u\in\mathcal{F}$ is inside $x_{[i, \infty)}$.
\end{proof}
\begin{lemma}
\label{sens1}
For any $u \in A^{*}$, there exists $n_0\in\n$ s.t. $\forall n > n_0, H^n(u0^{\infty})_1 = 1$.
\end{lemma}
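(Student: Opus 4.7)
The plan is to encode the time evolution of $H^t(u 0^\infty)$ as a trajectory of the rewriting system from Lemma \ref{rewriting}, and conclude via the fact that the system must reach the absorbing state $(\epsilon, 1)$. After one application of $H$, the perturbation at position $0$ fixes $H^t(u 0^\infty)_0 = 1$ for every $t \geq 1$, so that from time $1$ on, the configuration has the form ``$1$ at position $0$, a finite word $w$ at positions $1, \ldots, |w|$, zeros elsewhere.'' I would define an encoding $\Phi$ sending each such configuration to a state $(w, v) \in A^* \times \{0,1\}$, where $v$ tracks whether the current value at position $1$ has settled to $1$ (so that when $v = 1$ is eventually reached, we have the conclusion of the lemma).

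Next I would verify, rule by rule, that each of the seven rewriting rules matches the effect of applying $H$ an appropriate (possibly zero) number of times. The length-decreasing rules 1, 2, 5 correspond to a trailing $0$ or $1$ of $w$ being absorbed into the quiescent zeros on the right or into the stable value at position $1$; the length-preserving rules 3, 4, 6 correspond to genuine updates by $f$ of the active region of the configuration, with left boundary $1$ and right boundary $0$. Rule 3 is the critical case, and here Lemma \ref{regle3} applies directly: the sequence of applications $u \mapsto f(1u0)$ to a word ending in $2$ must, by that lemma, eventually yield $1^{|u|+1}$, at which point length-decreasing rules can fire. Rule 7, the absorbing rule, captures that once the active suffix has been fully dissipated, cell $1$ becomes and remains $1$ after a bounded delay.

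Once the encoding and correspondence are set up, Lemma \ref{rewriting} guarantees that the rewriting trajectory reaches $(\epsilon, 1)$ in finitely many steps, so there exists a finite time $n_0$ after which position $1$ of $H^{n_0}(u 0^\infty)$ equals $1$ and the configuration to the right of position $1$ contains no forbidden patterns from $\mathcal{F} = \{01, 12, 20, 22\}$. Lemma \ref{interdits} propagates the absence of forbidden patterns forward in time, and a direct case analysis on $f$ (with left boundary $1$ and no $12$ arising at positions $(1,2)$) shows that the value at position $1$ must stay at $1$ for all $n > n_0$.

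The main obstacle is setting up the encoding $\Phi$ correctly and verifying the case-by-case correspondence, particularly rule 3, where a long sequence of iterated $f$-applications (encapsulated by Lemma \ref{regle3}) has to be interpreted as a single length-preserving rewriting step which then feeds into the length-reducing rules. Persistence of the value $1$ at position $1$ after stabilization further requires appealing to Lemma \ref{interdits} to rule out any future occurrence of the disruptive pattern $12$ at positions $(1,2)$.
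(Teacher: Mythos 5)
Your proposal is correct and follows essentially the same route as the paper: both set up a step-by-step correspondence between the orbit of $u0^\infty$ under $H$ and the trajectory of the rewriting system of Lemma~\ref{rewriting} (rules 1, 2, 5 consuming a trailing letter without advancing time, rules 3, 4, 6, 7 each corresponding to one application of $H$), maintain the absence of the forbidden patterns $\mathcal{F}$ to the right of the active window via Lemma~\ref{interdits}, and conclude from the termination of the rewriting system at $(\epsilon,1)$. Only two cosmetic slips: rules 4 and 6 are length-decreasing, not length-preserving (only rule 3 preserves length), and Lemma~\ref{regle3} is really needed inside the proof of Lemma~\ref{rewriting} to rule out rule 3 firing forever, rather than in the correspondence itself.
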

\begin{proof}
Consider the sequences $(u^{(n)}, x^{(n)})_{n \in \mathbb{N}}$ and  $(l^{(n)})_ {n \in \mathbb{N}}$ from Lemma~\ref{rewriting} in which $(u^{(0)}, x^{(0)}) = (u_{[1,|u|-1]},0)$. Define the sequences $(k^{(n)})_{n \in \mathbb{N}}$ and $(y^{(n)})_{n \in \mathbb{N}}$ as follows:
\[
\begin{cases}
k^{(n+1)} = k^{(n)} + 1& \forall n\in\n \;\text{s.t.}\; (u^{(n)}, x^{(n)}) \xrightarrow{a} (u^{(n+1)}, x^{(n+1)}) \quad a=3,4,6,7  \\
k^{(n+1)} = k^{(n)} & \forall n\in\n\;\text{s.t.}\;(u^{(n)}, x^{(n)}) \xrightarrow{a} (u^{(n+1)}, x^{(n+1)})\quad a=1,2,5\\
k^{(0)} = 0& 
\end{cases},
\]
and, $\forall n\in\n$, $y^{n}=H^{k^{(n)}}(u0^{\infty})$, respectively.  
First of all, we are going to prove that the property 
\[
L(n)= \left( y^{n}_{[1,l^{(n)} + 1]} = u^{(n)}x^{(n)}\right),
\]
linking the dynamics of $H$ with the one induced by the rewriting system, and the property 
\[
M(n)= \left( \forall w\in\mathcal{F},\quad\text{$w$ does not appears inside}\; y^{n}_{[l^{(n)} + 1, \infty)} \right), 
\] 
are valid for all $n\in\n$. 
We proceed by induction.

The properties $L(n)$ and $M(n)$ statement are clearly true for $n=0$. Suppose now that they are valid for some $n\in\mathbb{N}$ and let the rewriting system evolve on $(u^{(n)}, x^{(n)})$ according to the rule $a$, for some $a=1,\ldots, 7$. 

If $a\in\{1,2,5\}$,  then $k^{(n+1)} = k^{(n)}$ and $l^{(n+1)} = l^{(n)} - 1$.  Hence, $L(n+1)$ is true. Moreover, when the restriction passes from $[l^{(n)} + 1, \infty)$ to $[l^{(n+1)} + 1, \infty)$, the additional word inside the configuration $y^{n+1}=y^{n}$ is either $00$, or $10$, or $11$, depending on the value of $a$. So, $M(n+1)$ is also valid.

If $a=7$, then $k^{(n+1)} = k^{(n)}+1$ and $l^{(n+1)} = l^{(n)}=0$.
Since $L(n)$ and $M(n)$ are true, $y^{n}_{[0,2]} \in \{ 100, 102, 110, 111 \}$, and then $x^{(n+1)} = 1 =y^{n+1}_1$. So, $L(n+1)$ is valid. Moreover, no element $\mathcal{F}$ appears inside $y^{n}_{[1, \infty)}$ and neither inside $y^{n}_{[0, \infty)}$, and, hence, by Lemma~\ref{interdits}, neither inside $y^{n+1}_{[1, \infty)}$. Thus, $M(n+1)$ is true.

If $a\in\{3,4,6\}$, then $k^{(n+1)} = k^{(n)} + 1$. By the fact that $L(n)$ is true, we have $u^{(n+1)} = f(y^{n}_{[0, l^{(n+1)} + 1]})$, or, equivalently, $u^{(n+1)} = y^{n+1}_{[1, l^{(n+1)}]}$. Since $M(n)$ is true, by Lemma~\ref{interdits}, $y^{n+1}_{[l^{(n)} + 2, \infty)}$ does not contain any element of $\mathcal{F}$. It remains to prove that $x^{(n+1)} = y^{n+1}_{ l^{(n+1)} + 1}$, and there is no word of $\mathcal{F}$ inside $y^{n+1}_{[ l^{(n+1)}+1, l^{(n)}+2]}$. 

If $a=3$, $l^{(n+1)}=l^{(n)}$ and we have  $y^{n+1}_{ l^{(n+1)} + 1} = f(20a)$ where necessarily $a\neq 1$ since $M(n)$ is valid and $01\in\mathcal{F}$. Hence,  $y^{n+1}_{ l^{(n+1)} + 1} = 0 = x^{(n+1)}$ and $L(n+1)$ is true. For a sake of argument, assume that the word $w=y^{n+1}_{[ l^{(n+1)}+1, l^{(n)}+2]}\in\mathcal{F}$. Necessarily, we obtain $w=01$ and so, by definition of $f$, $y^{n}_{[ l^{(n)}+2, l^{(n)}+3]} = 01$, that is a contradiction. Hence, $M(n+1)$ is valid.

If $a=4$ (resp., $a=6$), $l^{(n+1)}=l^{(n)} -1$ and we have  $y^{n+1}_{ l^{(n+1)} + 1} = f(a01)$ (resp., $f(a21)$). So,  $y^{n+1}_{ l^{(n+1)} + 1} = 0 = x^{(n+1)}$ and $L(n+1)$ is true. Since $M(n)$ is true, $y^{n}_{[ l^{(n)}, l^{(n)}+3]} = 01bc$ (resp., $21bc$) with  $bc\in\{00,02,10,11\}$. By definition of $f$, it follows that $y^{n+1}_{[ l^{(n+1)}, l^{(n+1)}+2]} = 111$ and then $M(n+1)$ is valid.
 
Summarizing, we have proved that $L(n)$ and $M(n)$ are valid for all $n\in\n$ and, in particular, 
\[
\forall n \in \mathbb{N}, \quad H^{k^{(n)}}(u0^{\infty})_{[1,l^{(n)} + 1]} = u^{(n)}x^{(n)}
\]
Now, let $m$ be the integer from Lemma~\ref{rewriting}. Recall that $\forall n\geq m$, $l^{(n)} = 0$ and $(u^{(n)}, x^{(n)}) = (\epsilon, 1)$. Thus, setting $n_0=k^{(m)}$, we obtain
\[
\forall n\geq n_0,\quad H^{n}(u0^{\infty})_1 = 1\xspace .
\]
\end{proof}

\begin{lemma}
\label{sens2}
For any $u\in A^{*}$ and any $n\in\n$, there exists $m > n$ such that $H^m(u2^{\infty})_1 = 2$.
\end{lemma}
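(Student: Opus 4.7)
The plan is to mirror the proof of Lemma~\ref{sens1}: introduce a rewriting system describing the dynamics of $H$ on configurations of the form $v2^{\infty}$, then combine it with an explicit propagation calculation at the interface with the $2$-tail. First I would record the following direct observation: if at some time $t\geq 1$ the configuration $H^{t}(u2^{\infty})$ has positions $0,1,\dots,k,k+1,\dots$ equal to $1,1,\dots,1,2,2,\dots$ with exactly $k+1$ leading $1$'s (for some $k\geq 1$), then an easy induction on $j\in[0,k]$ yields $H^{t+j}(u2^{\infty}) = 1^{k+1-j}\,2\,0^{j}\,2^{\infty}$. Indeed, at each step the leftmost $2$ of the tail is pushed one position to the left, because $f(1,1,2)=2$, leaving a fresh $0$ behind, because $f(1,2,0)=f(1,2,2)=0$; the $0$'s and the tail further to the right evolve as $f(0,0,0)=f(2,0,0)=f(0,0,2)=0$ and $f(0,2,2)=2$. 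Taking $j=k$ then gives $H^{t+k}(u2^{\infty})_{1}=f(1,1,2)=2$.

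The technical heart of the argument is thus to show that from any $u\in A^{*}$ the iterations of $H$ eventually produce a configuration of the above form $1^{k+1}2^{\infty}$. To this end I would introduce a rewriting system on $A^{*}\times\{0,1,2\}$ analogous to that of Lemma~\ref{rewriting}: a state $(w,\alpha)$ represents the configuration $1\,w\,\alpha\,2^{\infty}$, the simplification rules trim trailing characters matching the $2$-tail (analogues of rules 1, 2, 5), and the step rules apply the local rule $f$ at the interface (analogues of rules 3, 4, 6). A companion lemma analogous to Lemma~\ref{regle3} would show that the length-preserving step rule cannot be applied indefinitely, forcing the system to eventually reach a state corresponding to $1^{k+1}2^{\infty}$ for some $k\geq 1$.

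Combining the two pieces yields, for any $u\in A^{*}$, an integer $m_{0}\geq 0$ with $H^{m_{0}}(u2^{\infty})_{1}=2$. To upgrade this to the infinitely many $m$ required by the statement, I would exploit the fact that the class of configurations with an eventual $2$-tail is $H$-invariant (because $f(2,2,2)=2$). Given $n\in\n$, the configuration $y=H^{n+1}(u2^{\infty})$ still has a $2$-tail and can be written as $v2^{\infty}$, so applying the previous result to $y$ furnishes some $m'\geq 0$ with $H^{m'}(y)_{1}=2$, whence $m=n+1+m'>n$ works.

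The hardest step will be the design and termination analysis of the rewriting system. Unlike in Lemma~\ref{sens1}, where the system contracts cleanly to $(\epsilon,1)$, here the finite prefix can temporarily grow: a prefix ending in $1$ is extended by a $0$ at the next step because $f(1,2,2)=0$ creates a fresh non-$2$ character at the interface. The key observation to exploit is that, between such growth events, the internal dynamics dissolves the created $0$'s into a pure $1^{k}$ block through a cascade structurally analogous to the one of Lemma~\ref{regle3}, which is what guarantees convergence to the canonical form.
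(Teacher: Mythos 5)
Your propagation computation from the canonical form $1^{k+1}2^{\infty}$ is correct, and so is the final restart step (the set of configurations with a $2$-tail is $H$-invariant). But the entire weight of the lemma rests on your middle claim --- that the orbit of $u2^{\infty}$ eventually reaches the \emph{exact} form $1^{k+1}2^{\infty}$ --- and this is left as a plan whose sketched termination argument does not hold as stated. You assert that ``between such growth events, the internal dynamics dissolves the created $0$'s into a pure $1^{k}$ block.'' This fails for a general bulk: the prefix grows precisely when the cell adjacent to the $2$-tail carries a $1$ (since $f(1,2,2)=0$ creates a fresh $0$ there), and this already happens at the very first step whenever $u$ ends with a $1$, long before the interior of $u$ has been cleaned up; consecutive growth events are therefore not separated by a full dissolution phase, and the analogue of Lemma~\ref{regle3} you would need (with a $2$-boundary on the right and a word that can lengthen) is a genuinely new statement, not a routine adaptation. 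Moreover, convergence to the exact canonical form is a much stronger property than the lemma requires, and any honest proof of it essentially contains the weaker front argument below as a sub-step.

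The paper's proof bypasses the bulk entirely. Writing $z^{n}=H^{n}(u2^{\infty})$, it tracks only $a^{(n)}=\max\{i\in\n : z^{n}_{i}=1\}$ and $b^{(n)}=\min\{i\in\n : z^{n}_{i}=2 \wedge \forall j>i,\ z^{n}_{j}\neq 1\}$. Between these two positions the configuration is forced to be $10^{b^{(n)}-a^{(n)}-1}2$, and everything to the right of $a^{(n)}$ lies in $\{0,2\}$ and is inert; the rightmost $1$ then advances one cell per step (because $f(1,0,\cdot)=1$) until it abuts the $2$, which then moves one cell to the left (because $f(\cdot,1,2)=2$ and $f(1,2,\cdot)=0$). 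Hence $b^{(n)}$ strictly decreases along the orbit whenever it exceeds $1$, so it reaches $1$, at which moment $z^{k}_{1}=2$; restarting from any later time yields the infinitely many $m$. Whatever sits between position $0$ and the rightmost $1$ is irrelevant. If you want to salvage your plan, replace the convergence-to-$1^{k+1}2^{\infty}$ step by this two-index front-tracking argument; the rewriting system and its termination analysis then become unnecessary.
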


\begin{proof}
For any $u \in A^{*}$ and any $n\in\n$, define the configuration $z^{n} = H^n(u2^{\infty})$ and the integers 
$a^{(n)} = \max \{i \in \mathbb{N} : {z_i^{n}} = 1\}$ and $b^{(n)} = \min \{i \in \mathbb{N} : {z_i ^{n}}= 2 \wedge \forall j > i, {z_j^{n}}\neq 1\}$. Remark that $a^{(n)}$ and $b^{(n)}$ are well defined and $\forall n\in\n$, $a^{(n)} < b^{(n)}$. We want to prove that $\forall n\in\n, (b^{(n)} > 1 \Rightarrow \exists k > n, b^{(k)} < b^{(n)})$.
Let $n\in\n$ be such that $b^{(n)} > 1$. Since $z^{n}_{[a^{(n)}, b^{(n)}]} = 10^{b^{(n)} - a^{(n)} - 1}2$ and $\forall i > b^{(n)}$, ${z_i^{n}}\neq 1$, by definition of $f$, it follows that $a^{(n+i)} = a^{(n)} + i$, $b^{(n+i)} = b^{(n)}$, for $i\in[0,b^{(n)} - a^{(n)} - 1]$. Hence, $b^{(n +  b^{(n)} - a^{(n)})} =  b^{(n)} - 1<b^{(n)}$.
\end{proof}
We conclude stating that \H is sensitive.
\begin{proof}
It is a direct consequence of Lemmata~\ref{sens1} and~ \ref{sens2}.
\end{proof}
The following example shows that default rules individually
defining almost equicontinuous CA can also constitute \nca that
have a completely different behavior from the one in
Example~\ref{ex:aeqsens}.

\begin{example}[An equicontinuous \nca made by almost equicontinuous CA]\mbox{}\\
Let $A = \{0,1,2\}$ and define the local rule $f: A^3\to A$ as:
$\forall x, y, z\in A$, $f(x,y,z)=2$ if $x=2$ or $y=2$ or $z=2$,
$z$ otherwise. 
\ignore{\[ f(x,y,z) =
\begin{cases}
2 &\text{if $x=2$ or $y=2$ or $z=2$} \\
z &\text{otherwise}\enspace.
\end{cases}
\]
} The CA $F$ of local rule $f$ is almost equicontinuous since $2$
is a blocking word. The restriction of $F$ to $\{0,1\}^{\z}$ gives
the shift map which is sensitive. Thus $F$ is not equicontinuous.
Define now the following d\nca \H:
\[
\forall x\in\az,\forall i\in\z,\quad \H(x)_i =
\begin{cases}
2 &\text{if $i=0$} \\
f(x_{i-1},x_i,x_{i+1}) &\text{otherwise}\enspace.
\end{cases}
\]
\end{example}
\ignore{
\begin{proposition}
The d\nca \H is equicontinuous.
\end{proposition}
}
We now prove that $\H$ is equicontinuous.
\begin{proof}
Let $n\in\n,x,y \in\az$ be such that $x_{[-2n,2n]} =
y_{[-2n,2n]}$. Since $H$ is of radius $1$, $\forall k \leq n,
H^k(x)_{[-n,n]} = H^k(y)_{[-n,n]}$ and $\forall k > n$,
$H^k(x)_{[-n,n]} = 2^{2n+1} = H^k(y)_{[-n,n]}$. So, $H$ is
equicontinuous.
\end{proof}
\ignore{
\subsection{Expansivity and permutivity}
Recall that a rule $f:A^{2r+1}$ is \emph{leftmost} (resp.,
rightmost) \emph{permutive} if $\forall u \in A^{2r}, \forall b
\in A, \exists!a \in A, f(au) = b$ (resp., $f(ua) = b$). This
definition can be easily extended to \nca. Indeed, we say that a
r\nca is leftmost (resp. rightmost) permutive if all $h_i$ are
leftmost (resp. rightmost) permutive. A \nca is \emph{permutive}
if it is leftmost or rightmost permutive.

 In a very similar way to CA, given a \nca $H$ and two integers
 $a,b\in\n$ with $a<b$, the
 \emph{column subshift}  $(\Sigma_{[a,b]},\sigma)$ of $H$ is
 defined as follows $\Sigma_{[a,b]} = \{y \in {(A^{b-a+1})}^{\n} : \exists x \in\az,
\forall i \in\n, y_i = H^i(x)_{[a,b]}\}$.
\ignore{
\[
\Sigma_{[a,b]} = \{y \in {(A^{b-a+1})}^{\n} : \exists x \in\az,
\forall i \in\n, y_i = H^i(x)_{[a,b]}\}\enspace.
\]
}
Consider the map $\mathcal{I}_{[a,b]}: \az\to\Sigma_{[a,b]}$
defined as $\forall x\in\az, \forall i\in\n,
\mathcal{I}_{[a,b]}(x)_i = H^i(x)_{[a,b]}$. It is not difficult to
see that  $\mathcal{I}$ is continuous and surjective. Moreover
$H\circ\mathcal{I}_{[a,b]}=\mathcal{I}_{[a,b]}\circ\sigma$.
\ignore{ the following diagram commmutes:
\[
\begin{diagram}
\node{A^{\mathbb{Z}}} \arrow{e,t,->}{H} \arrow{s,l,->}{\mathcal{I}_{[a,b]}} \node[1]{A^{\mathbb{Z}}} \arrow{s,r,->}{\mathcal{I}_{[a,b]}} \\
\node{\Sigma_{[a,b]}} \arrow{e,b,->}{\sigma}
\node[1]{\Sigma_{[a,b]}}
\end{diagram}
\]
} Thus $(\Sigma_I,\sigma)$ is a \emph{factor} of the \nca
$(\az,H)$ and we can lift some properties from
$(\Sigma_{[a,b]},\sigma)$ to $(\az,H)$. The following result tells
that something stronger happens in the special case of leftmost
and rightmost permutive r\nca.

\begin{theorem}
Any  leftmost and rightmost permutive r\nca of radius $r$ is
conjugated to the full shift $((A^{2r})^{\n},\sigma)$.
\end{theorem}

\begin{proof}
Just remark that the map $\mathcal{I}_{[1,2r]}: \az\to
(A^{2r})^{\n}$ is bijective.\qed
\end{proof}

The requirements of the previous theorem are very strong. Indeed,
there exist \nca which are topologically conjugated to a full
shift but that are not permutive. As an example, consider the \nca
$H$ defined as follows

\[
\forall x \in A^{\mathbb{Z}}, \forall i \in \mathbb{Z}, H(x)_i =
\left\{
\begin{array}{cl}
x_{i-1} & \text{if}\; i \leq 0 \\
x_{i+1} & \text{otherwise}\enspace.
\end{array}
\right.
\]
Then, $\Sigma_{[0,1]} = (A^2)^{\mathbb{N}}$ et
$\mathcal{I}_{[0,1]}$ is injective.
}
\section{Conclusions}
This paper have introduced \nca, an extension of CA model obtained by relaxing the
uniformity property (\ie, the fact that the same local rule is applied
to all sites of the CA lattice). The study of how this change affects
the dynamics of the systems has just started. We proved several
results concerning basic set properties like injectivity and surjectivity.
Moreover, we studied how informations about the \nca can determine
properties on the underlying CA or vice-versa.

The study of \nca  can be continued along
several different directions. Of course, it would be interesting to progress in the analysis of the dynamical behavior. In particular we believe it would be worthwhile to study how information
moves along the space-time diagrams and how the density of changes
affects the entropy of the system.

It is well-known that CA cannot be used a random generator and,
in general, they are poor (but fast) random generators. Can
\nca be a better tool in this context?
\section{Acknowledgements}
This work has been supported by the French ANR Blanc Projet ``EMC'', by the ``Coop\'eration
scientifique international R\'egion Provence--Alpes--C\^ote
d'Azur'' Project ``Automates Cellulaires, dynamique symbolique et
d\'ecidabilit\'e'', and by the PRIN/MIUR project ``Mathematical
aspects and forthcoming applications of automata and formal
languages''.
\bibliographystyle{plain}
\bibliography{latajou}
\end{document}